\journal{Information Systems}
\newcommand{\metodo}{Lathe}
    \newcolumntype{Y}{>{\centering\arraybackslash}X}
\newcommand{\subalign}[1]{%
	\vcenter{%
		\Let@ \restore@math@cr \default@tag
		\baselineskip\fontdimen10 \scriptfont\tw@
		\advance\baselineskip\fontdimen12 \scriptfont\tw@
		\lineskip\thr@@\fontdimen8 \scriptfont\thr@@
		\lineskiplimit\lineskip
		\ialign{\hfil$\m@th\scriptstyle##$&$\m@th\scriptstyle{}##$\hfil\crcr
			#1\crcr
		}%
	}%
}
\newcolumntype{L}[1]{>{\raggedright\let\newline\\\arraybackslash\hspace{0pt}}m{#1}}
\newcolumntype{C}[1]{>{\centering\let\newline\\\arraybackslash\hspace{0pt}}m{#1}}
\newcolumntype{R}[1]{>{\raggedleft\let\newline\\\arraybackslash\hspace{0pt}}m{#1}}
\renewcommand{\subparagraph}{\paragraph}
\newtheoremstyle{noind}
{5pt}
{5pt}
{\itshape}
{}
{\bfseries}
{.}
{2pt} 
{}
\theoremstyle{noind}
\newtheorem{definition}{Definition}
\newtheorem{example}{{Example}}
\newif\ifReview
\newcommand{\red}[1]{\textcolor{red}{#1}}
    \newcommand{\mycomment}[2]{\newline\noindent\shadowbox{\begin{minipage}{\dimexpr0.98\textwidth-\shadowsize-2\fboxrule-2\fboxsep}\textbf{#1: #2}\end{minipage}}}
    \newcommand{\add}[1]{\textcolor{blue}{#1}}
    \newcommand{\rem}[1]{\red{\sout{#1}}}
    \newcommand{\mycomment}[2]{}
    \newcommand{\add}[1]{{#1}}
    \newcommand{\tirar}[1]{}
    \newcommand{\rem}[1]{}
\tikzset{main node/.style={text centered},}
\newcommand*\circled[1]{\tikz[baseline=(char.base)]{
            \node[shape=circle,draw,inner sep=2pt] (char) {#1};}}
    \newcounter{tuplesno}
    \newcommand{\deftuple}[1]{
        \refstepcounter{tuplesno}
        \label{t:#1}
        \reftuple{#1}
    }
    \newcommand{\reftuple}[1]{$t_{\reftupleno{#1}}$}
    \newcommand{\reftupleno}[1]{\ref*{t:#1}}
    \newcommand{\mreftuple}[1]{t_{\reftupleno{#1}}}
\begin{document}
	
	\begin{frontmatter}
		
		\title{Supporting Schema References in Keyword Queries \\ over Relational Databases}
		
		\author{Paulo Martins}
		\author{Altigran da Silva}
		\author{João Cavalcanti}
	    \author{Edleno de Moura}
		\address{Institute of Computing, Federal University of Amazonas, Manaus, Brazil
		\{paulo.martins,alti,john,edleno\}@icomp.ufam.edu.br}

		\begin{abstract}
			  Relational Keyword Search (R-KwS) systems enable naive/informal users to explore and retrieve information from relational databases without knowing schema details or query languages. These systems take the keywords from the input query, locate the elements of the target database that correspond to these keywords, and look for ways to “connect” these elements using information on referential integrity constraints, i.e., key/foreign key pairs. Although several such systems have been proposed in the literature, most of them only support queries whose keywords refer to the contents of the target database and just very few support queries in which keywords refer to elements of the database schema. This paper proposes LATHE, a novel R-KwS designed to support such queries. To this end, in our work, we first generalize the well-known concepts of Query Matches (QMs) and Candidate Joining Networks (CJNs) to handle keywords referring to schema elements and propose new algorithms to generate them. Then, we introduce an approach to automatically select the CJNs that are more likely to represent the user intent when issuing a keyword query. This approach includes two major innovations: a ranking algorithm for selecting better QMs, yielding the generation of fewer but better CJNs, and an eager evaluation strategy for pruning void useless CJNs. We present a comprehensive set of experiments performed with query sets and datasets previously used in experiments with state-of-the-art R-KwS systems and methods. Our results indicate that LATHE can handle a wider variety of keyword queries while remaining highly effective, even for large databases with intricate schemas.
		\end{abstract}
		
		\begin{keyword}
			Relational Databases \sep Keyword Search \sep Information Retrieval
		\end{keyword}
		
	\end{frontmatter}

	\section{Introduction}\label{chap:intro}

\textit{Keyword Search over Relational Databases} (R-KwS) enables naive/informal users to retrieve information from relational databases (DBs) without any knowledge about schema details or query languages. The success of search engines shows that untrained users are at ease using keyword search to find information of interest. 
However, enabling users to search relational DBs using keyword queries is a challenging task because the information sought frequently spans multiple relations and attributes, depending on the schema design of the underlying DB. As a result, R-KwS systems face the challenge of automatically determining which pieces of information to retrieve from the database and how to connect them to provide a relevant answer to the user.

In general, the keywords from a query may refer to both database values, such as tuples containing these keywords, and schema elements, such as relation and attribute names. For instance, consider the query 
$``will\ smith\ films"$
over a database on movies. The keywords $``will"$ and $``smith"$ may refer to values of person names. 
The keyword $``films"$ on the other hand is more likely to refer to a schema element, the name of the relation about movies.
Although a significant number of query keywords correspond to schema references \cite{Bergamaschi@SIGMOD11Keymantic}, the majority of previous work on R-KwS systems in the literature does not support references to schema information such as the one in the query above. 
As a result, given a query, they will search for attributes whose tuples include the keyword $``films"$, which is unlikely to yield a useful answer for the user.

\newcommand{\ftnlathe}{\footnote{The name {\metodo} refers to the fact that our system assigns a structure or form to an unstructured keyword-based query}}

\newcommand{\ftncjn}{\footnote{Most of the previous work uses the term \emph{Candidate Networks} instead. Here, we use \emph{Candidate Joining Networks} because we consider it more meaningful.}}

In this work, we study new techniques for supporting schema references in keyword queries over relational databases. Specifically, we propose {\metodo}\ftnlathe, a new R-KwS system to generate a suitable SQL query from a keyword query, considering that keywords may refer either to instance values or to database schema elements, i.e., relations and attributes. 
{\metodo} follows the \textit{Schema Graph} approach for R-KwS systems \cite{Oliveira@ICDE18MatCNGen,Coffman@CIKM10Framework}. Given a keyword query, this approach consists of generating relational algebra expressions called \textit{Candidate Joining Networks}{\ftncjn} (CJNs), which are likely to express user intent when formulating the original query. 
The generated CJNs are \textit{evaluated}, that is, they are translated into SQL queries and executed by a DBMS, resulting in several \emph{Joining Networks of Tuples} (JNTs) which are collected and supplied to the user. 

In the literature, the most well-known algorithm for CJN Generation is CNGen, which was first presented in the system DISCOVER \cite{Hristidis@VLDB02DISCOVER}, but was adopted by most R-KwS systems \cite{Agrawal@ICDE02DBXplorer,Hristidis@VLDB03Efficient,Luo@SIGMOD07Spark,Coffman@KEYS10CD}. Despite the possibly large number of CJNs, most works in the literature focused on improving CN Evaluation and ranking of JNTs instead. Specifically, DISCOVER-II \cite{Hristidis@VLDB03Efficient}, SPARK \cite{Luo@SIGMOD07Spark}, and CD \cite{Coffman@KEYS10CD} used information retrieval (IR) style score functions to rank the top-K JNTs. KwS-F \cite{Baid@VLDB10KwSF} imposed a time limit for CJN evaluation, returning potentially partial results as well as a summary of the CJNs that have yet to be evaluated.
Later, CNRank~\cite{Oliveira@ICDE15CNRank} introduces a CJN ranking, requiring only the top-ranked CJNs to be evaluated. 
MatCNGen~\cite{Oliveira@ICDE18MatCNGen,Oliveira@TKDE20} proposed a novel method for generating CJNs that efficiently enumerated the possible matches for the query in the DB. These \textit{Query Matches} (QMs) are then used to guide the CJN generation process, greatly decreasing the number of generated CJNs and improving the performance of CJN evaluation.

Among the methods based on the Schema Graph approach, {\metodo} is, to the best of our knowledge, the first method to address the problem of generating and ranking CJNs considering queries with keywords that can refer to either schema elements or attribute values. 
We revisited and generalized concepts introduced in previous approaches~\cite{Hristidis@VLDB02DISCOVER,Oliveira@ICDE15CNRank,Oliveira@ICDE18MatCNGen,Oliveira@TKDE20}, such as tuples-sets, QMs, and the CJNs themselves, to enable schema references.
In addition, we proposed a more effective approach to CJN Generation that included two major innovations: 
QM ranking and Eager CJN Evaluation.
{\metodo} roughly matches keywords to the values of the attributes or to schema elements such as names of attributes and relations. Next, the system combines the keyword matches into QMs that cover all the keywords from the query. The QMs are ranked and only the most relevant ones are used to generate CJNs. The CJN generation explores the primary key/foreign key relationships to connect all the elements of the QMs. 
In addition, {\metodo} employs an eager CJN evaluation strategy, which ensures that all CJNs generated will yield non-empty results when evaluated. The CJNs are then ranked and evaluated. Finally, the CJN evaluation results are delivered to the user.
Unlike the previous methods, {\metodo} provides the user with the most relevant answer without relying on JNTs rankings. This is due to the effective rankings of QMs and CJNs that we propose, which are absent in the majority of previous work.

We performed several experiments to assess the effectiveness and efficiency of {\metodo}. 
First we compared the quality of its results with those obtained with several previous R-KwS systems, including the state-of-the-art QUEST~\cite{Bergamaschi@VLDBDEMO13_QUEST} system using a benchmark proposed by Coffman \& Weaver \cite{Coffman@CIKM10Framework}.
Second we assessed the quality of our ranking of QMs. The ranking of CJNs was then evaluated by comparing different configurations in terms of the number of QMs, the number of CJNs generated per QM, and the use of the eager evaluation strategy. 
Finally, we assessed the performance of each phase of {\metodo}, as well as the trade off between quality and performance of various system configurations. 
{\metodo} achieved better results than all of the R-KwS systems tested in our experiments. 
Also, our results indicate that the ranking of QMs and the eager CJN evaluation greatly improved the quality of the CJN generation.

Our key contributions are: 
(i) a novel method for generating and ranking CJNs with support for keywords referring to schema elements;
(ii) a novel algorithm for ranking QMs, which avoids the processing of less likely answers to a keyword query;
(iii) an eager CJN evaluation for discarding spurious CJNs;
(iv) a simple and yet effective ranking of CJNs which exploits the ranking of QMs. 

The remainder of this paper is organized as follows: 
Section~\ref{chap:related-work} reviews the related literature on relational keywords search systems based on schema graphs and support to schema references. 
Section~\ref{chap:overview} summarizes all of the phases of our method, which are discussed in detail in 
Sections~\ref{chap:keyword-matching}-\ref{chap:candidate-network}. Section~\ref{chap:experiments} summarizes the findings of the  experiments we conducted.
Finally, Section~\ref{chap:future-work} summarizes the findings and outlines our plans for future.
	
	\section{Background and Related Work}\label{chap:related-work}

In this section, we discuss the background and related work on keyword search systems over relational databases and on supporting schema references in such systems. For a more comprehensive view of the state-of-the-art in keyword-based and natural language queries over databases, we refer the interested reader to a recent survey~\cite{Affolter@VLDBJ2019_SurveyNLIDB}.

\subsection{Relational Keyword Search Systems}

Current R-KwS systems fall in one of two distinct categories: systems based on \emph{Schema Graphs} and systems based on 
\emph{Instance Graphs}.  Systems in the first category are based on the concept of \emph{Candidate Joining Networks} (CJNs), which are networks of joined relations that are used to generate SQL queries and whose evaluation return several \emph{Joining Networks of Tuples} (JNTs) which are collected and supplied to the user. 
This method was proposed in DISCOVER \cite{Hristidis@VLDB02DISCOVER} and DBXplorer \cite{Agrawal@ICDE02DBXplorer}, and it was later adopted by several other systems, including DISCOVER-II \cite{Hristidis@VLDB03Efficient}, SPARK \cite{Luo@SIGMOD07Spark}, CD \cite{Coffman@KEYS10CD}, KwS-F \cite{Baid@VLDB10KwSF}, CNRank \cite{Oliveira@ICDE15CNRank}, and MatCNGen \cite{Oliveira@ICDE18MatCNGen,Oliveira@TKDE20}. Systems in this category make use of the underlying basic functionality of the RDBMS by generating appropriate SQL queries to retrieve answers to keyword queries posed by users.

Systems in the second category are based on a structure called \emph{Instance Graph}, whose nodes represent tuples associated with the keywords they contain, and the edges connect these tuples based on referential integrity constraints. 
BANKS \cite{Aditya@VLDB02BANKS}, BANKS-II \cite{Kacholia@VLDB05Bidirectional}, BLINKS \cite{He@SIGMOD07BLINKS} and, Effective \cite{Liu@SIGMOD06Effective} use this approach to compute keyword queries results by finding subtrees in a data graph that minimizes the distance between nodes matching the given keywords. 
These systems typically generate the query answer in a single phase that combines the tuple retrieval task and the answer schema extraction. However, the Instance Graph approach requires a materialization of the DB and requests a higher computational cost to deliver answers to the user. Furthermore, the important structural information provided by the database schema is ignored, once the data graph has been built.

\subsection{R-KwS Systems based on Schema Graphs}

In our research, we focus on systems based on Schema Graphs, since we assume that the data we want to query are stored in a relational database and we want to use an RDBMS capable of processing SQL queries. Also, our work expands on the concepts and terminology introduced in DISCOVER \cite{Hristidis@VLDB02DISCOVER,Hristidis@VLDB03Efficient} and expanded in CNRank \cite{Oliveira@ICDE15CNRank} and MatCNGen \cite{Oliveira@ICDE18MatCNGen,Oliveira@TKDE20}. This formal framework is used and expanded to handle keyword queries that may refer to attribute values or to database schema elements. As a result, we can inherit and maintain all guarantees regarding the generation of minimal, complete, sound, and meaningful CJNs.

The best-known algorithm for CJN Generation is CNGen, which was introduced in DISCOVER~\cite{Hristidis@VLDB02DISCOVER} but was later adopted as a default in most of the R-KwS systems proposed in the literature \cite{Agrawal@ICDE02DBXplorer,Hristidis@VLDB03Efficient,Luo@SIGMOD07Spark,Coffman@KEYS10CD}. To generate a complete, non-redundant set of CJNs, this algorithm employs a Breadth-First Search approach \cite{Cormen@09Algorithms}. As a result, CNGen frequently generates a large number of CJNs, resulting in a costly CJN generation and evaluation process.

Initially, most of the subsequent work focused on the CJN evaluation only. Specifically, as many CJNs were generated by CNGen that should be evaluated, producing a larger number of JNTs,
such systems as DISCOVER-II \cite{Hristidis@VLDB03Efficient}, SPARK \cite{Luo@SIGMOD07Spark}, and CD \cite{Coffman@KEYS10CD} introduced algorithms for ranking JNTs using IR style score functions.

KwS-F~\cite{Baid@VLDB10KwSF} addressed the efficiency and scalability problems in CJN evaluation in a different way. Their approach consists of two steps. First, a limit is imposed on the time the system spends evaluating CJNs. After this limit is reached, the system must return the (possibly partial) top-K JNTs. 
Second, if there are any CJNs that have yet to be evaluated, they are presented to the user in the form of query forms, from which the user can choose
one and the system will evaluate the corresponding CJN. 

CNRank~\cite{Oliveira@ICDE15CNRank} proposed a method for lowering the cost of CJN evaluation by ranking them based on the likelihood that they will provide relevant answers to the user.  Specifically, CNRank presented a probabilistic ranking model that uses a \emph{Bayesian Belief Network} to estimate the relevance of a CJN given the current state of the underlying database. A score is assigned to each generated CJN, so that only a few CJNs with the highest scores 
need to be evaluated.

MatCNGen~\cite{Oliveira@ICDE18MatCNGen,Oliveira@TKDE20} introduced a match-based approach for generating CJNs. The system enumerates the possible ways which the query keywords can be matched in the DB beforehand,
to generate query answers. MatCNGen then generates a single CJN, for each of these QMS, drastically reducing the time required to generate CJNs. 
Furthermore, because the system assumes that answers must contain all of the query keywords, each keyword must appear in at least one element of a CJN. As a result of the generation process avoiding generating too many keyword occurrence combinations, a smaller but better set of CJNs is generated. 

Lastly, Coffman \& Weaver~\cite{Coffman@CIKM10Framework} proposed a framework for evaluating R-KwS systems and reported experimental results over three representative standardized datasets they built, namely MONDIAL, IMDb, and Wikipedia, along with respective query workloads. The authors compare nine R-KwS systems, assessing their effectiveness and performance in a variety of ways. The resources of this framework were also used in the experiments  of several other studies
on R-KwS systems~\cite{Luo@SIGMOD07Spark,Oliveira@ICDE15CNRank,Oliveira@ICDE18MatCNGen,Oliveira@TKDE20,Coffman@TKDE12Evaluation}.

\subsection{Support to Schema References in R-KwS}

Overall there are few systems in the literature that support schema references in keywords queries. One of the first such systems was BANKS~\cite{Bhalotia@ICDE02BANKS}, a R-KwS system based on Instance Graphs. However, hence the query evaluation with keywords matching metadata can be relatively slow, since a large number of tuples may be defined to be relevant to the keyword.

Support for schema references in keyword queries was extensively addressed by Bergamaschi et al. in  Keymantic~\cite{Bergamaschi@SIGMOD11Keymantic}, KEYRY~\cite{Bergamaschi@ER2011KEYRY}, and QUEST~\cite{Bergamaschi@VLDBDEMO13_QUEST}. All these systems can be classified as schema-based since they aim at generating a suitable SQL query given an input keyword query. They do not, however, rely on the concept of CJNs, as {\metodo} and all DISCOVER-based systems do. 
Keymantic~\cite{Bergamaschi@SIGMOD11Keymantic} and KEYRY~\cite{Bergamaschi@ER2011KEYRY} consider a scenario in which data instances are not acessible, such as in databases on the hidden web and sources hidden behind wrappers in data integration settings, where typically only metadata is made available. Both systems rely on similarity techniques based on structural and lexical knowledge that can be extracted from the available metadata, e.g., names of attributes and tables, attribute domains, regular expressions, or from other external sources, such as ontologies, vocabularies, domain terminologies, etc. The two systems mainly differ in the way they rank the possible interpretations they generate for an input query. While Keymantic relies on an extension the authors proposed for the Hungarian algorithm, KEYRY is based on the Hidden Markov Model, a probabilistic sequence model, adapted for keyword query modeling.   QUEST~\cite{Bergamaschi@VLDBDEMO13_QUEST} can be thought of as an extension of KEYRY because it uses a similar strategy to rank the mappings from keywords to database elements. QUEST, on the other hand, considers the database instance to be accessible and includes features derived from it for ranking interpretations, in contrast to KEYRY. 

From these systems, QUEST is the one most similar to {\metodo}. However, it is difficult to draw a direct comparison between the two systems as QUEST does not rely on the formal framework from CJN-related previous work \cite{Hristidis@VLDB02DISCOVER,Hristidis@VLDB03Efficient,Oliveira@ICDE15CNRank,  Oliveira@ICDE18MatCNGen,Oliveira@TKDE20} and it also resolves a smaller set of keyword queries then Lathe. 
QUEST, in particular, does not support keyword queries whose resolution necessitates SQL queries with self-joins. As a result, when comparing QUEST to other approaches, the authors limited the experimentation to 35 queries rather then the 50 included in the original benchmark~\cite{Bergamaschi@VLDBDEMO13_QUEST,Coffman@CIKM10Framework}. {\metodo}, on the other hand, supports all 50 queries.

Finally, there are systems that propose going beyond the retrieval of tuples that fulfill a query expressed using
keywords and try to provide a functionality close to structured query languages. This is the case of
SQAK~\cite{Tata@SIGMOD08SQAK} that allows users to specify aggregation functions over schema elements.
Such an approach was later expanded in systems such as SODA~\cite{Blunschi@VLDB2012_soda} and 
SQUIRREL~\cite{Ramada@InfoSys2020SQUIRREL}, which aim to handle not only aggregation functions, 
but also keywords that represent predicates, groupings, orderings and so on. 
To support such features, these systems rely on a variety of resources that are not part of the database schema or
instances. Among these are conceptual schemas, generic and domain-specific ontologies, lists of reserved keywords, and user-defined metadata patterns. We see such useful systems as being closer to natural language query systems~\cite{Affolter@VLDBJ2019_SurveyNLIDB}.
In contrast, {\metodo}, like any typical R-KwS system, 
aims at retrieving sets of JNTs that fulfill the query, and not computing results with 
the tuples. In addition, it does not rely on any external resources.
	
	\section{ {\metodo}  Overview}\label{chap:overview} 

In this section we present an overview of {\metodo}. We begin by presenting a simple example of the task carried out by our system. For this, we illustrate in Figure~\ref{fig:imdb_instance} a simplified excerpt from the well-known IMDB\footnote{Internet Movie Database https://www.imdb.com/interfaces/}.

\begin{figure}[htb]
	\centering
	\setlength\fboxsep{10pt} 
	\begin{adjustbox}{max width=\textwidth}			
		\begin{tabular}{lll}
			\begin{tabular}{@{\hspace{0.75em}}r@{}|c|l|}
	\multicolumn{1}{l}{}
	& \multicolumn{2}{@{}l}{\textbf{PERSON}}\\
	\cline{2-3} 
	
	\multicolumn{1}{l|}{}
	& \textbf{ID}
	& \textbf{Name}\\
	\cline{2-3} 
	
	\deftuple{ws}
	& \reftupleno{ws}
	& Will Smith\\
	
	\deftuple{wt}
	& \reftupleno{wt}
	& Will Theakston\\
	
	\deftuple{ms}
	& \reftupleno{ms}
	& Maggie Smith\\
	
	\deftuple{sb}
	& \reftupleno{sb}
	& Sean Bean\\
	
	\deftuple{ew}
	& \reftupleno{ew}
	& Elijah Wood\\
	
	\deftuple{aj}
	& \reftupleno{aj}
	& Angelina Jolie\\
	\cline{2-3} 
	
	\multicolumn{1}{l}{}
	& \multicolumn{1}{l}{}
	& \multicolumn{1}{l}{}
\end{tabular}
			&
			\multicolumn{2}{l}{
\begin{tabularx}{0.78\textwidth}{r@{}|c|X|l|}
	\multicolumn{1}{l}{}
	& \multicolumn{3}{@{}l}{\textbf{MOVIE}}\\ 
	\cline{2-4}
	
	\multicolumn{1}{l|}{}
	& \textbf{ID}
	& \textbf{Title}
	& \textbf{Year}\\ 
	\cline{2-4}
	
	\deftuple{mib}
	& \reftupleno{mib}
	& Men in Black
	& 1997\\
	
	\deftuple{legend}
	& \reftupleno{legend}
	& I am Legend
	& 2007\\
	
	\deftuple{hp1}
	& \reftupleno{hp1}
	& Harry Potter and the Sorcerer's Stone
	& 2001\\
	
	\deftuple{lotr1}
	& \reftupleno{mib}
	& The Lord of the Rings: The Fellowship of the Ring 
	& 2001\\
	
	\deftuple{lotr2}
	& \reftupleno{lotr2}
	& The Lord of the Rings: The Return of the King
	& 2003\\
	
	
	\deftuple{mrs}
	& \reftupleno{mrs}
	& Mr. \& Mrs. Smith
	& 2005\\
	
	\cline{2-4}
	
	\multicolumn{1}{l}{}
	& \multicolumn{2}{l}{} 
	
\end{tabularx}}
			\\
			\\
			\begin{tabular}{r@{}|c|l|}
	\multicolumn{1}{l}{}
	& \multicolumn{2}{@{}l}{\textbf{CHARACTER}}\\
	\cline{2-3} 
	
	\multicolumn{1}{l|}{}
	& \textbf{ID}
	& \textbf{Name}\\
	\cline{2-3} 
	
	\deftuple{agentj}
	& \reftupleno{agentj}
	& Agent J\\
	
	\deftuple{neville}
	& \reftupleno{neville}
	& Robert Neville\\
	
	\deftuple{flint}
	& \reftupleno{flint}
	& Marcus Flint\\
	
	\deftuple{minerva}
	& \reftupleno{minerva}
	& Minerva McGonagall\\
	
	\deftuple{boromir}
	& \reftupleno{boromir}
	& Boromir\\
	
	\deftuple{frodo}
	& \reftupleno{frodo}
	& Frodo Baggins\\
	

    \deftuple{jane}
	& \reftupleno{jane}
	& Jane Smith\\
	
	\cline{2-3}                  
\end{tabular}
			&
			\begin{tabular}{r@{}|c|l|}
	\multicolumn{1}{l}{}
	& \multicolumn{2}{@{}l}{\textbf{ROLE}}\\ 
	\cline{2-3}
	
	\multicolumn{1}{l|}{}
	& \textbf{ID}
	& \textbf{Name}\\ 
	\cline{2-3}
	
	\deftuple{actor}
	& \reftupleno{actor}
	& Actor\\
	
	\deftuple{actress}
	& \reftupleno{actress}
	& Actress\\
	
	\deftuple{producer}
	& \reftupleno{producer}
	& Producer\\
	
	\deftuple{writer}
	& \reftupleno{writer}
	& Writer\\
	
	\deftuple{director}
	& \reftupleno{director}
	& Director\\
	
	\deftuple{editor}
	& \reftupleno{editor}
	& Editor\\ 
	
	\cline{2-3}
\end{tabular}
			& 
			\begin{tabular}{r@{}|c|c|c|c|c|}
	\multicolumn{1}{l}{}  
	& \multicolumn{5}{@{}l}{\textbf{CASTING} } 
	\\ 
	\cline{2-6}
	
	
	
	\multicolumn{1}{l|}{} 
	& \textbf{ID}
	& \textbf{PID} 
	& \textbf{MID} 
	& \textbf{ChID} 
	& \textbf{RID} \\ 
	
	\cline{2-6}
	
	\deftuple{ws-mib-agentj}
	&\reftupleno{ws-mib-agentj}
	&\reftupleno{ws}
	&\reftupleno{mib}
	&\reftupleno{agentj}
	&\reftupleno{actor}\\
	
	\deftuple{ws-legend-neville}
	&\reftupleno{ws-legend-neville}
	&\reftupleno{ws}
	&\reftupleno{legend}
	&\reftupleno{neville}
	&\reftupleno{actor}\\
	
	\deftuple{wt-hp1-flint}
	&\reftupleno{wt-hp1-flint}
	&\reftupleno{wt}
	&\reftupleno{hp1}
	&\reftupleno{flint}
	&\reftupleno{actor}\\
	
	\deftuple{ms-hp1-minerva}
	&\reftupleno{ms-hp1-minerva}
	&\reftupleno{ms}
	&\reftupleno{hp1}
	&\reftupleno{minerva}
	&\reftupleno{actress}\\
	
	\deftuple{sb-lotr1-boromir}
	&\reftupleno{sb-lotr1-boromir}
	&\reftupleno{sb}
	&\reftupleno{lotr1}
	&\reftupleno{boromir}
	&\reftupleno{actor}\\
	
	\deftuple{sb-lotr2-boromir}
	&\reftupleno{sb-lotr2-boromir}
	&\reftupleno{sb}
	&\reftupleno{lotr2}
	&\reftupleno{boromir}
	&\reftupleno{actor}\\
	
	
	\deftuple{ew-lotr1-frodo}
	&\reftupleno{ew-lotr1-frodo}
	&\reftupleno{ew}
	&\reftupleno{lotr1}
	&\reftupleno{frodo}
	&\reftupleno{actor}\\
	
	\deftuple{ew-lotr2-frodo}
	&\reftupleno{ew-lotr2-frodo}
	&\reftupleno{ew}
	&\reftupleno{lotr2}
	&\reftupleno{frodo}
	&\reftupleno{actor}\\
	
	\deftuple{aj-mrs-jane}
	&\reftupleno{aj-mrs-jane}
	&\reftupleno{aj}
	&\reftupleno{mrs}
	&\reftupleno{jane}
	&\reftupleno{actress}\\
	
	\cline{2-6}
\end{tabular}
		\end{tabular}
	\end{adjustbox}	
	\caption{A simplified excerpt from IMDB}
	\label{fig:imdb_instance}
\end{figure}


Consider that a user inputs the keyword query $Q{=}{``will\,smith\,films"}$ and assume that she wants the system to list the movies in which Will Smith appears. Notice that, informally, the terms ``\textit{will}'' and ``\textit{smith}'' are likely to match the contents of a relation from the DB, while the term ``films'' is likely to match the name of a relation or attribute. 

As other methods previously proposed in the literature, such as CNGen~\cite{Hristidis@VLDB02DISCOVER} and MatCNGen \cite{Oliveira@ICDE18MatCNGen,Oliveira@TKDE20}, the main goal of  {\metodo} is, given a query such as $Q$, generating a SQL query that,
when executed, fulfills the information needed for the user. The difference between {\metodo} and these previous methods is that they are not able to handle references to schema elements, such as ``films'' in $Q$.

For query $Q$, two of the possible SQL queries 
that would be generated are queries $S_1$ and $S_2$, presented in Figures~\ref{fig:imdb_sample_sql_queries}~(a) and~(b), respectively. 
The respective results of these queries for the database of Figure~\ref{fig:imdb_instance} are presented in Figures~\ref{fig:imdb_sample_sql_queries}(c) and~(d).
Query $S_1$ retrieves the movies which Will Smith was in, and thus, satisfies the original user intent. On the other hand, query $S_2$ retrieves movies in which two different persons, whose names respectively include the terms ``\textit{will}'' and ``\textit{smith}'', participated in.

\begin{figure}
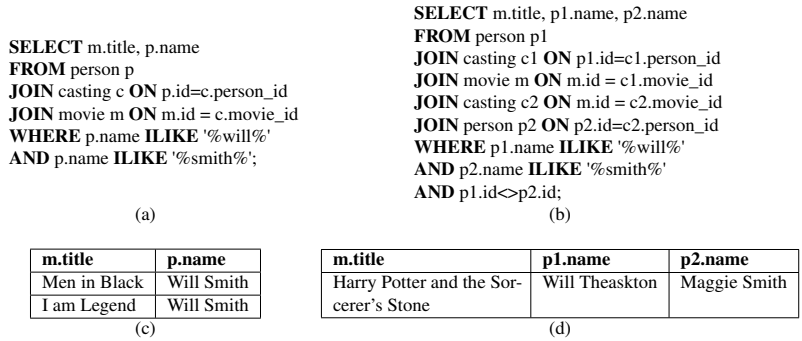

	\newsavebox{\mylistingbox}
	
	\begin{lrbox}{\mylistingbox}
		\begin{tabular}{cc}
	\begin{lstlisting}
	SELECT m.title, p.name
	FROM person p
	JOIN casting c ON  p.id=c.person_id
	JOIN movie m ON m.id = c.movie_id
	WHERE p.name ILIKE '%will%' 
	AND p.name ILIKE '%smith%';
	\end{lstlisting} 
	&
	\begin{lstlisting}
	SELECT m.title, p1.name, p2.name
	FROM person p1
	JOIN casting c1 ON p1.id=c1.person_id
	JOIN movie m ON m.id = c1.movie_id
	JOIN casting c2 ON m.id = c2.movie_id
	JOIN person p2 ON p2.id=c2.person_id
	WHERE p1.name ILIKE '%will%' 
	AND p2.name ILIKE '%smith%'
	AND p1.id<>p2.id;
	\end{lstlisting}
	\\
	(a) & (b) \\\\
	\begin{tabular}{|l|l|} 
		\hline
		\textbf{m.title} & \textbf{p.name}  \\ 
		\hline
		Men in Black     & Will Smith       \\
		\hline
		I am Legend      & Will Smith       \\
		\hline
	\end{tabular}
	&
	\begin{tabular}{|p{3.5cm}|l|l|} 
		\hline
		\textbf{m.title}                      & \textbf{p1.name} & \textbf{p2.name}  \\ 
		\hline
		Harry Potter and the Sorcerer's Stone & Will Theaskton   & Maggie Smith      \\
		\hline
	\end{tabular}
	\\
	(c) & (d) \\
\end{tabular}
	\end{lrbox}
	\scalebox{0.7}{\usebox{\mylistingbox}}
	\caption{SQL queries generated for the keyword query ``\textit{will smith movies}'' and their returned results.}
	\label{fig:imdb_sample_sql_queries}
\end{figure}

	

As this example indicates, there may be several plausible SQL queries related to a given keyword query. Therefore, it is necessary to decide which alternative is more likely to fulfill the user intent. This task is also carried out by \metodo. 

Next, we present an overview of the components and the functioning of \metodo.

\subsection{System Architecture} 
\label{sec:architecture}

In this section, we present the overall architecture of {\metodo}. We base our discussion on Figure~\ref{fig:lathe_arch}, which illustrates the main phases that comprise the operation of the method.

\begin{figure*}[!htb]
	\centering
	\includegraphics[width=\textwidth]{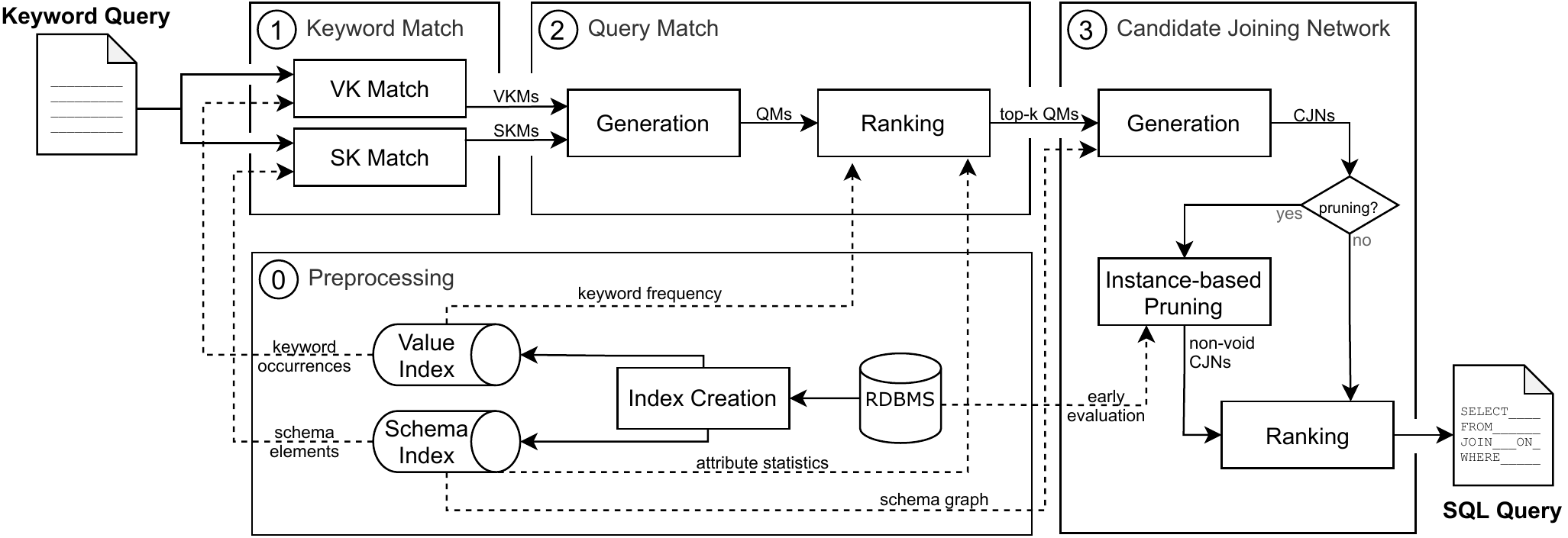}
	\caption{Main phases and architecture of {\metodo}}
	\label{fig:lathe_arch}
\end{figure*}

The process begins with an input keyword query posed by the user. The system then attempts to associate each of the keywords from the query with a database schema element, such as a relation or an attribute.
The system relies on the DB schema, i.e., the names of relations and attributes, 
or on the DB instance, i.e., on the values of the attributes, for this.
This phase, called \textit{Keyword Matching}~\circled{1}, generates sets of  \textit{Value-Keyword Matches} (VKMs), which associate keywords with sets of tuples whose attribute values contain these keywords, and \textit{Schema-Keyword Matches} (SKMs), which associate keywords with names of relations or attributes deemed as similar to these keywords. 

In the next phase, \textit{Query Matching}~\circled{2},
{\metodo} combines keyword matches (VKMs and SKMs) so that they form a \textit{total} and \textit{minimal} cover for the keywords from the query.
That is, the keyword matches comprise all the keywords from the query and no keyword match is redundant. 
These keyword match combinations are referred to as \textit{Query Matches} (QMs). 
Although the Query Matching phase may generate a large number of QMs due to its combinatorial nature, only a few of them are useful in producing plausible answers to the user. 
As a result,  we propose the first algorithm for \textit{Ranking Query Matches} in the literature.
This ranking assigns a score to QMs based on their likelihood of satisfying the needs of the user when formulating the keyword query. 
Thus, the system only outputs a few top-ranked QMs to the next phases. By doing so, it avoids having to process less likely QMs. 

Lastly, in the \textit{Candidate Joining Network Generation}~\circled{3} phase, the system searches for interpretations for the keyword query. That is, the system tries to connect all the keyword matches from the QMs through CJNs, which are based on the schema graph. CJNs can be thought as relational algebra joining expressions that can be directly translated into SQL queries. 
Then, we use the ranked QMs to generate a \textit{Candidate Joining Network Ranking}. 
This ranking favors CJNs that are more concise in terms of the number of relations they employ. 
Once we have identified the most likely CJNs, they can be evaluated as SQL queries that are executed by a DBMS to the users. 
We notice that some of the generated CJNs may return empty results when they are evaluated. Thus, {\metodo} can alternatively evaluate CJNs before ranking them and prune such void CJNs. We call this process \emph{instance-based pruning}.

During the whole process of generating CJNs, {\metodo} uses two data structures which are created in a \textit{Preprocessing stage}~\circled{0}: the \textit{Value Index} and the  \textit{Schema Index}. During the whole process of generating CJNs, {\metodo} uses two data structures which are created in a \textit{Preprocessing stage}~\circled{0}: the \textit{Value Index} and the  \textit{Schema Index}.

The Value Index is an inverted index that stores keyword occurrences in the database, indicating the relations, attributes, and tuples where a keyword appears. 
These occurrences are retrieved to generate VKMs. Furthermore, the Value Index is used to calculate \textit{term frequencies} for the QMs and CJNs Rankings. 
The Schema Index is an inverted index that stores database schema information, as well as statistics about relations and attributes. 
While database schema information, such as PK/FK relationships, are used for the generation of CJNs, the statistics about attributes, such as \textit{norm} and \textit{inverted frequency}, are used for rankings of QMs and CJNs.  

In the following sections we present each of the phases of Figure~\ref{fig:lathe_arch}, describing the steps, definitions, data structures, and algorithms we used.
	
	\section{Keyword Matching}\label{chap:keyword-matching} 

In this section, we present the details on keyword matches and their generation. Their role in our work is
to associate each keyword from the query to some attribute or relation in the database schema. Initially, we classify them as either VKMs and SKMs, according to the type of associations they represent. Later, we provide a generalization of the keyword matches and we introduce the concept of \emph{Keyword-Free Matches}, which will be used in the next phases of our method.

\subsection{Value-Keyword Matching}

We may associate the keywords from the query to some attribute in the database schema-based on the values of this attribute in the tuples that contain these keywords using \emph{value-keyword matches}, according to Definition~\ref{def:value-keyword-match}. 

\begin{definition}
	\label{def:value-keyword-match}	
	Let $Q$ be a keyword query and $R$ be a relation state over the relation schema $R(A_1,\ldots,A_m)$. A \textbf{value-keyword match}  from $R$ over $Q$ is given by:
	\[R^V[A_1^{K_1} , \ldots, A_m^{K_m}]	= \{ t|t \in R
	\wedge \forall A_i: W(t[A_i]) \cap Q = K_i\}  \]	
	where  
	$K_i$ is the set of keywords from $Q$ that are associated to the attribute $A_i$,
	$W(t[A_i])$ returns the set of words in $t$ for attribute $A_i$
	and $V$ denotes a match of keywords to the database values.
\end{definition}
	
Notice that each tuple from the database can be a member of only one value-keyword match. Therefore, the VKMs of a given query are \textbf{disjoint sets of tuples}.

Throughout our discussion, for the sake of compactness in the notation, we often omit mappings of attributes to empty keyword sets in the representation of a VKM. For instance,
we use the notation $R^V[A_1^{K_1}]$ to represent $R^V[A_1^{K_1} , A_2^{ \{\} } , \ldots, A_n^{ \{\} }]$.	

\begin{example}
	\label{ex:valuekeywordmatch}
	Consider the  database instance of Figure~\ref{fig:imdb_instance}. The following VKMs can be generated for the query \textit{``will smith films''}.
	\begin{alignat*}{2}
	&PERSON^V[name^{ \{will,smith\} }]  &= \{\mreftuple{ws}\}\\
	&PERSON^V[name^{ \{will\} }] 		&= \{\mreftuple{wt}\}\\
	&PERSON^V[name^{ \{smith\} }]		&= \{\mreftuple{ms}\}
	\end{alignat*}
\end{example}

VKMs play a similar role to the tuple-sets from related literature~\cite{Hristidis@VLDB02DISCOVER,Oliveira@ICDE18MatCNGen}. 
They are, however, more expressive because they specify which attribute is associated with each keyword. 
Previous R-KwS systems based on the DISCOVER system, on the other hand, are unable to create tuple-sets that span multiple attributes~\cite{Hristidis@VLDB02DISCOVER,Hristidis@VLDB03Efficient,Oliveira@ICDE15CNRank}. Example~\ref{ex:valuekeywordmatch-multipleattributes} shows a keyword query that includes more than one attribute.

\begin{example}
	\label{ex:valuekeywordmatch-multipleattributes}
	Consider the query ``\textit{lord rings 2001}'' whose intent is to return which \textit{Lord of the Rings} movie was launched in 2001. We can represent it with the following value-keyword match:
	\begin{alignat*}{2}
	&MOVIE^V[title^{ \{lord,rings\} }, year^{ \{2001\} }] &= \{\mreftuple{lotr1}\}
	\end{alignat*}
\end{example}

The generation of VKMs uses a structure we call the \textit{Value Index}.  This index stores the 
occurrences of keywords in the database, indicating the relations and tuples a keyword appears and which attributes are mapped to the keyword.
{\metodo} creates the Value Index during a preprocessing phase that scans all target relations only once. 
This phase comes before the query processing and it is not expected to be repeated frequently. 
As a result, without further interaction with the DBMS, answers are generated for each query. The Value Index has following the structure, which is shown in Example~\ref{ex:valueindex}.
\[I_V=\{term:\{relation:\{attribute:\{tuples\}\}\}\}\]

\begin{example}
	The VKMs presented in Example~\ref{ex:valuekeywordmatch} are based on the following keyword occurrences:.
	\begin{alignat*}{3}
	&I_V[will]&=&\{PERSON:\{name:\{\mreftuple{ws},\mreftuple{wt}\}\}\} \\
	&I_V[smith]&=&\{PERSON:\{name:\{\mreftuple{ws},\mreftuple{ms}\}\}\} \\
	&I_V[smith][PERSON]&=&\{name:\{\mreftuple{ws},\mreftuple{ms}\}\} \\
	&I_V[smith][PERSON][name]&=&\{\mreftuple{ws},\mreftuple{ms}\}
	\end{alignat*}
	\label{ex:valueindex}
\end{example}

In {\metodo}, the generation of VKMs is carried out by the VKMGen algorithm, presented in details in 
\ref{apx:vkmgen}.



\subsection{Schema-Keyword Matching}

We may associate the keywords from the query to some attribute or relation in the database schema based on the name of the attribute or relation using \emph{Schema-Keyword Matches}, according to Definition~\ref{def:schema-keyword-match}.
Specifically, our method matches keywords to the names of relations and attributes using similarity metrics.

\begin{definition}
	\label{def:schema-keyword-match}	
	
	Let $k \in Q$ be a keyword from the query, $R(A_1,\ldots,A_m)$ be a relation schema.	A \textbf{schema-keyword match}  from $R$ over $k$ is given by:
	\[R^S[A_1^{K_1} , \ldots, A_m^{K_m}]= \{ t|t \in R \wedge \forall k \in K_i: sim(A_i, k) \geq \varepsilon \} \]
	where 
	$1 \leq i \leq m$,
	$K_i$ is the set of keywords from $Q$ that are associated with the schema element $A_i$, 
	$sim(A_i, k)$ gives the similarity between the name of a schema element $A_i$ and the keyword $k$, which must be above a threshold $\varepsilon$,
	and $S$ denotes a match of keywords to the database schema. 
\end{definition}

In this representation, we use the artificial attribute $self$ when we match a keyword to the name of a relation. Example~\ref{ex:schemakeywordmatch} shows an instance of a schema-keyword match wherein the keyword ``$films$'' is matched to the relation $MOVIE$.

\begin{example}
	\label{ex:schemakeywordmatch}
	The following schema-based relation matches are created for the query ``will smith films'', considering a threshold $\varepsilon=0.6$.
	\begin{alignat*}{3}
	&MOVIE^S[self^{ \{films\} }] 	&=& \{\mreftuple{mib},\mreftuple{legend},\mreftuple{hp1},\mreftuple{lotr1},\mreftuple{lotr2},\mreftuple{mrs}\}\\
	&MOVIE^S[title^{ \{will\} }] 	&=& \{\mreftuple{mib},\mreftuple{legend},\mreftuple{hp1},\mreftuple{lotr1},\mreftuple{lotr2},\mreftuple{mrs}\}\\
	&PERSON^S[name^{ \{smith\} }]	&=& \{\mreftuple{ws},\mreftuple{wt},\mreftuple{ms},\mreftuple{sb},\mreftuple{ew}\}
	\end{alignat*}
	where $sim(a,b)$ gives the similarity between the schema element $a$ and the keyword $b$, $sim(movie, films)=1.00$, $sim(title, will)=0.87$ and $sim(name, smith)=0.63$.
\end{example}

Despite their similarity to VKMs, the schema-keyword matches serve a different purpose in our method, ensuring that the attributes of a relation appear in the query results. 
As a result, they do not ``filter'' any of the tuples from the database, implying that they do not represent any selection operation over database relations. 

\subsubsection*{Similarity Metrics}

For the matching of keywords to schema elements, we used two similarity metrics based on the lexical database \textit{WordNet}:
the \emph{Path similarity}~\cite{Miller@98Wordnet,Pedersen@ACL04WordNetSimilarity} and the \emph{Wu-Palmer similarity}~\cite{Wu@ACL94WuPalmer,Pedersen@ACL04WordNetSimilarity}. We introduce the WordNet database and the two similarity metrics below.

\paragraph*{WordNet Database}

WordNet  \cite{Miller@98Wordnet} is a large lexical database that resembles a thesaurus, as it groups words based on their meanings. 
One use of WordNet  is  to  measure similarity between words based on the relatedness of their \emph{senses},  the  many  different  meanings  that  words can have \cite{Keselj@09}. 
As a result, the word ``film'' can refer to a movie, as well as the act of recording or the plastic film. 
Each of these senses have a different relation to the sense of a ``show".
Wordnet represents sense relationships , such as \textit{synonymy}, \textit{hyponymy}, and \textit{hypernymy}, to measure similarity between words. Synonyms are two word senses that share the same meaning. In addition, we say that the sense $c_1$ is a hyponym of the sense $c_2$ if $c_1$ is more specific, denoting a subclass of $c_2$. For instance, \textit{``protagonist''} is a hyponym of \textit{``character''}; \textit{``actor''} is a hyponym of \textit{``person''}, and \textit{``movie''} is a hyponym of \textit{``show''}. The hypernymy is the opposite of hyponymy relation. 
Thus, $c_2$ us a hypernymy of $c_1$.

\paragraph*{Path Similarity}

The Path similarity~\cite{Miller@98Wordnet,Pedersen@ACL04WordNetSimilarity} exploits the structure and content of the WordNet database. The relatedness score is inversely proportional to the number of nodes along the shortest path between the senses of two words. If the two senses are synonyms, the path between them has length 1. The relatedness score is calculated as follows:
\[sim_{path}(w_1,w_2)=\max_{\substack{c_1 \in senses(w_1)\\c_2 \in senses(w_2)}}\left[\frac{1}{|shortest\_path(c_1,c_2)|}\right]\]

\paragraph*{Wu-Palmer Similarity}

The Wu-Palmer measure (WUP)~\cite{Wu@ACL94WuPalmer,Pedersen@ACL04WordNetSimilarity}  calculates relatedness by considering the depths of the two synsets $c_1$ and $c_2$ in the WordNet taxonomies, along with the depth of the \textit{Least Common Subsumer}(LCS). 
The most specific synset $c_{3}$ is the LCS, which is the ancestor of both synsets $c_1$ and $c_2$.
Because the depth of the LCS is never zero, the score can never be zero (the depth of the root of a taxonomy is one). Also, the score is 1 if the two input synsets are the same. The WUP similarity for two words $w_1$ and $w_2$ is given by:

\[sim_{wup}(w_1,w_2)=\max_{\substack{c_1 \in senses(w_1)\\c_2 \in senses(w_2)}}\left[2\times\frac{depth(lcs(c_1,c_2))}{depth(c_1,c_2)}\right]\]

As in the case of VKMs, we detail the SKMGen algorithm used in {\metodo} in 
\ref{apx:skmgen}.



\subsection{Generalization of Keyword Matches}

Initially, we presented Definitions~\ref{def:value-keyword-match} and \ref{def:schema-keyword-match} which, respectively, introduce VKMs and SKMs. We chose to explain the specificity of these concepts separately for didactic purposes. 
They are, however, both components of a broader concept, \textit{Keyword Match} (KM), which we define in Definition~\ref{def:keyword-match}. 
In the following phases, this generalization will be useful when merging VKMs and 
SKMs.

\begin{definition}
	\label{def:keyword-match}	
	Let $Q$ be a keyword query and $R$ be a relation state over the relation schema $R(A_1,\ldots,A_m)$. 
	Let $V\!K\!M=R^V[A_1^{K_1^S}, \ldots, A_m^{K_m^S}]$ be a value-keyword match from R over Q. 
	Let $S\!K\!M=R^S[A_1^{K_1^S}, \ldots, A_m^{K_m^S}]$ be a schema-keyword match from R over Q.
	A general \textbf{keyword match}  from $R$ over $Q$ is given by:
    \[
		R^S[A_1^{K_1^S} , \ldots, A_m^{K_m^S}]^V[A_1^{K_1^V} , \ldots, A_m^{K_m^V}]	= 
		V\!K\!M \cap S\!K\!M
    \]
\end{definition}
The representations of VKMs and SKMs in the general notation are given as follows:
\[R^S[A_1^{K_1} , \ldots, A_m^{K_m}]	 = R^S[A_1^{K_1} , \ldots, A_m^{K_m}]^V[A_1^{ \{\} } , \ldots, A_m^{\{\}}]	\]
\[R^V[A_1^{K_1} , \ldots, A_m^{K_m}]	 = R^S[A_1^{ \{\} } , \ldots, A_m^{\{\}}]^V[A_1^{K_1} , \ldots, A_m^{K_m}]	\]

Another concept required for the generation of QMs and CNs is \textit{keyword-free matches}, which we describe in Definition~\ref{def:free-keyword-match}. They are KMs that are not associated with any keyword but are used as auxiliary structures, such as intermediate nodes in CJNs.

\begin{definition}
	\label{def:free-keyword-match}
	We say that a keyword match $K\!M$ given by:
	\[K\!M =R^S[A_1^{K_1^S} , \ldots, A_m^{K_m^S}]^V[A_1^{K_1^V} , \ldots, A_m^{K_m^V}]\]
	is a \textbf{keyword-free match} if, and only if, 
	$\nexists K_i^S {\neq} \{\} 
	\land \nexists K_i^V {\neq} \{\}$,
	where $1 \leq i \leq m$.
\end{definition}

For the sake of simplifying the notation, we will represent a keyword-free match as $R^S[\ ]^V[\ ]$ or simply by $R$.

	\section{Query Matching}\label{chap:query-matching} 

In this section, we describe the processes of generating and ranking QMs, which are combinations of the keyword matches generated in the previous phases that comprise every keyword from the keyword query.

\subsection{Query Matches Generation}

We combine the associations present in the KMs to form total and non-redundant answers for the user. 
In other words, {\metodo} looks for KM combinations that satisfy two conditions: (i) every keyword from the query must appear in at least one of the KMs and (ii) if any KM is removed from the combination, 
the combination no longer meets the first condition. These combinations, called \emph{Query Matches} (QMs), are described in Definition~\ref{def:querymatch}

\begin{definition}
	Let $Q$ be a keyword query. 
	Let $M=\{K\!M_{1}, \ldots, K\!M_{n}\}$ be a set of keyword matches for Q in a certain database instance $I$, where:
	\begin{alignat*}{2}
    K\!M_{i}=&&&R_i^{S}[A_{i,1}^{K^S_{i,1}},\ldots,A_{i,m_i}^{K^S_{i,m_i}}]^{V}[A_{i,1}^{K^V_{i,1}},\ldots,A_{i,m_i}^{K^V_{i,m_i}}]
    \end{alignat*}
	Also,	let 
	$C_{K\!M_i}{=}
	\bigcup_{\substack{
			{1 \leq j \leq m_i} \\
			X \in \{S,V\}
	}}
	K_{i,j}^X$
	and 
	$C_{M}{=}
	\bigcup_{\substack{
			{1 \leq i \leq n\ } \\
	}}
	C_{K\!M_i}$
	be the sets of all keywords associated with $K\!M_i$ and with $M$, respectively.
	We say that $M$ is a \textbf{query match} for $Q$ if, and only if,  $C_{M}$ forms a \textbf{minimal set cover}
	of the keywords in $Q$. That is, $C_{M}= Q$ and $C_{M}{\setminus}C_{K\!M_i} \neq Q$, $\forall K\!M_i \in M$.
	\label{def:querymatch}	
\end{definition}

Notice that a QM cannot contain any keyword-free match, as it would not be minimal anymore. Example~\ref{ex:querymatch} presents combinations of KMs which are or are not QMs.

\begin{example}
	\label{ex:querymatch}
	Considering the KMs from the Examples \ref{ex:valuekeywordmatch} and \ref{ex:schemakeywordmatch}, only some of the following sets are considered QMs for the query ``will smith films'':
	\begin{alignat*}{1}
	&M_1 = \{PERSON^V[name^{ \{will,smith\} }], MOVIE^S[self^{ \{films\} }]\}\\
	&M_2 = \{PERSON^V[name^{ \{will\} }],PERSON^V[name^{ \{smith\} }] , MOVIE^S[self^{ \{films\} }]\}\\
	&M_3 = \{PERSON^V[name^{ \{will\} }], PERSON^V[name^{ \{smith\} }]\}\\
	&M_4 = \{
	PERSON^V[name^{ \{will,smith\} }],
	MOVIE^S[self^{ \{films\} }],
	CHARACTER\}\\
	&M_5 = \{PERSON^V[name^{ \{will,smith\} }], PERSON^V[name^{ \{smith\} }], MOVIE^S[self^{ \{films\} }]\}
	\end{alignat*}
	The sets $M_1$ and $M_2$ are considered QMs. 
	In contrast, the sets of keyword matches $M_3$, $M_4$ and $M_5$  are not QMs. While $M_3$ does not include all query keywords,
	$M_4$ and $M_5$ are not minimal, that is, they have unnecessary KMs.
\end{example}

We present the QMGen algorithm for generating QMs in 
\ref{apx:qmgen}.


	
	    \subsection{Query Matches Ranking} \label{sec:query-match-ranking}

As described in Section~\ref{chap:overview}, {\metodo} performs a ranking of the QMs generated in the previous step. This ranking is necessary because frequently many QMs are generated, yet, only a few of them are useful to produce plausible answers to the user.

{\metodo} estimates the relevance of QMs based on a \emph{Bayesian Belief Network} model for the current state of the underlying database. 
In practice, this model assess two types of relevance when ranking query matches. 
The the TF-IDF model is used to calculate the \emph{value-based score}, which adapts the traditional Vector space model to the context of relational databases, as done in LABRADOR~\cite{Mesquita@IPM07LABRADOR} and CNRank~\cite{Oliveira@ICDE15CNRank}.
The \emph{schema-based score}, on the other hand, is calculated by estimating the similarity 
between keywords and schema elements names. 

In {\metodo}, only the top-k QMs in the ranking are considered in the succeeding phases. By doing so, we avoid generating CJNs that are less likely to properly interpret the keyword query.

\subsection*{Belief Bayesian Network}

We adopt the Bayesian framework proposed by \cite{Ribeiro@1996BBN} and \cite{Cristo@2003BBN}
for modeling distinct IR problems. This framework is simple and allows for the incorporation of features from distinct models into the same representational scheme. 
Other keyword search systems, such as LABRADOR~\cite{Mesquita@IPM07LABRADOR} and CNRank~\cite{Oliveira@ICDE15CNRank}, have also used it.

In our model, we interpret the QMs as documents, which are ranked for the keyword query.
Figure~\ref{fig:bbn-will-smith-fillms} illustrates an example of the adopted Bayesian Network. 
The nodes that represent the keyword query are located at the top of the network, on the Query Side. 
The Database Side, located at the bottom of the network, contains the nodes that represent the QM that will be scored.
The center of the network is present on both sides and is made up of sets of keywords: the set $V$ of all terms present in the values of the database and the set $S$ of all schema element names. 

\begin{figure}[htb]
	\centering
	\includegraphics[width=\textwidth]{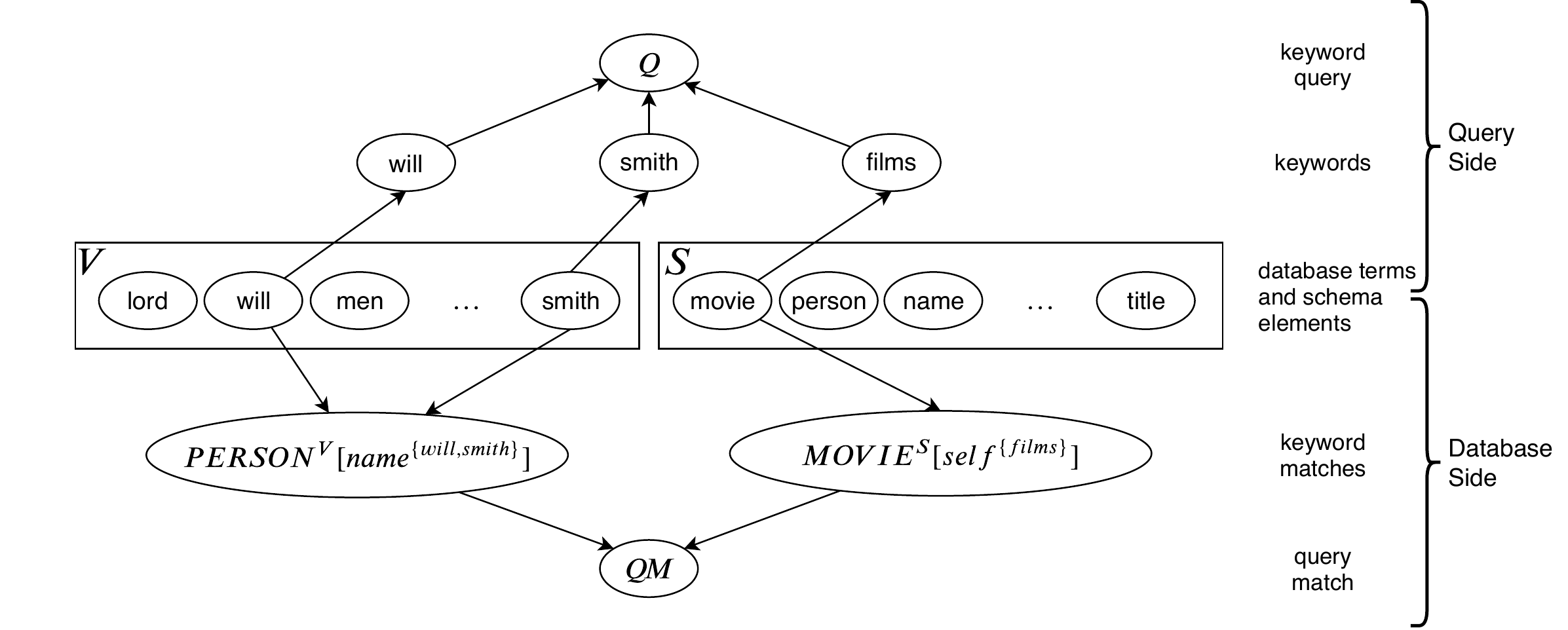}
	\caption{Bayesian Network corresponding to the query $Q=\{will,smith,films\}$}
	\label{fig:bbn-will-smith-fillms}
\end{figure}


In our Bayesian Network, we rank QMs based on their similarities with the keyword query. This similarity is interpreted as the probability of observing a query match $QM$ given the keyword query $Q$, that is, $P(QM|Q)= \mu P(QM \wedge Q)$, where $\mu = 1/P(Q)$ is a normalizing constant, as used in \cite{Pearl@2014Probabilistic}.

Initially, we define a random binary variable associated with each keyword from the sets $V$ and $S$, which indicates whether the keyword was observed in the keyword query. As these random variables are the root nodes of our Bayesian Network, all of the probabilities of the other nodes are dependent on them. Therefore, if we consider $v \subseteq V$ and $s \subseteq S$ as the sets of keywords observed, we can derive the probability of any non-root node $x$ as follows: $P(x)=P(x|v,s)\times P(v)\times P(s)$.

As all the possibilities of $v$ and $s$ are equally likely \emph{a priori}, we can calculate them as $P(v)=(1/2)^{|V|}$ and $P(s)=(1/2)^{|S|}$,  respectively.

The instantiation of the root nodes of the network separates the query match nodes from the query nodes, making them mutually independent. Therefore:
\[P(QM \wedge Q) = P(Q|v,s)P(QM|v,s)P(v)P(s)\]

The probability of the keyword query $Q=\{q_1,\ldots,q_{|Q|}\}$ is split between the probability of each of its keywords:
\[P(Q|v,s) = \prod_{1 \leq i \leq |Q|} P(q_i|v,s)\]
A keyword $q_i$ from the query is observed, given the sets $s$ and $v$, either if $q_i$ occurs in the values of the database or if $q_i$ has a similarity above a threshold $\varepsilon$ with a schema element.
\[P(q_i|v,s) = (q_i \in v) \veebar (\exists k \in s: sim(q_i,k) \geq \varepsilon) \]

Similarly, in our network, the probability of a query match $QM$ is splited between the probability of each of its KMs.
\[P(QM|v,s)= \prod_{1 \leq i \leq |QM|} P(K\!M_{i}|v,s)\]

We compute the probability of KMs using two different metrics: a \emph{schema score} based on the same similarities used in the generation of SKMs; and a \emph{value score} based on a Vector model \cite{Baeza@2011ModernRI,Salton@IPM88TFIDF} using the cosine similarity.

\[P(K\!M_{i}|v,s) =
\prod_{
    \substack{
        1 \leq j \leq m_i \\
        K^{V}_{i,j} \neq \emptyset
    }
} cos(\vv{A_{i,j}},\vv{v \cap K_{i,j}^V})
\prod_{
    \substack{
        1 \leq j \leq m_i \\
        K^{S}_{i,j} \neq \emptyset
    }
}
\frac{
    \sum_{t \in s \cap K_{i,j}^S} sim(A_{i,j},t)
}{|s \cap K_{i,j}^S|}
\]
where $K\!M_{i}=R_i^{S}[A_{i,1}^{K^S_{i,1}},\ldots,A_{i,m_i}^{K^S_{i,m_i}}]^{V}[A_{i,1}^{K^V_{i,1}},\ldots,A_{i,m_i}^{K^V_{i,m_i}}]$.

It is important to distinguish the documents from the Bayesian Network model and the Vector Model. The documents of the Bayesian Network are QMs, and the query is the keyword query itself, whereas the documents of the Vector model are database attributes, and the query is the set of keywords associated with the KM.

Once we know the document and the query of the Vector model, we can calculate the cosine similarity by taking inner product of the document and the query. The cosine similarity formula is given as follows:
\[cos(\vv{A_{i,j}},\vv{v \cap K_{i,j}^V}) = \frac{\vv{A_{i,j}^V} \boldsymbol{\cdot} \vv{v \cap K_{i,j}^V}}{|\vv{A_{i,j}}|\times|\vv{v \cap K_{i,j}^V}|} = \alpha \times
\frac{\displaystyle \sum_{t \in V} w(\vv{A_{i,j}},t) \times w(\vv{v \cap K_{i,j}^V},t)}
{\displaystyle \sqrt{\sum_{t \in V} w(\vv{A_{i,j}},t)^2}}\]
where $\alpha=1 / (\sum_{t \in V}w(\vv{v \cap K_{i,j}^V},t)^2)^{\sfrac{1}{2}}$ is the constant that represents the norm of the query, which is not necessary for the ranking.

The weights for each term are calculated using the TF-IDF measure. This measure is based on the term frequency and specificity in the collection. We use the \emph{raw frequency} and \emph{inverse frequency}, which are the most recommended form of TF-IDF weights \cite{Baeza@2011ModernRI}.

\[
w(\vv{X},t) = freq_{X,t} \times \log{\frac{N_A}{n_t}}
\]
where $\vv{X}\in\{\vv{A_{i,j}},\vv{v \cap K_{i,j}^V}\}$ can be either the document or the query, $N_A$ is the number of attributes in the database, and $n_t$ is the number of attributes that are mapped to the occurrences of the term $t$. In the case of $\vv{X}$ be the query, $freq_{X,t}$ gives the number of occurrences of a term $t$ in the keyword query, which is generally 1. In the case of $\vv{X}$ be an attribute(document),  $freq_{X,t}$ gives the occurrences of a term $t$ in an attribute, which is obtained from the Value Index.

We present the algorithm for ranking QMs in
\ref{apx:qmrank}.




	
	\section{Candidate Joining Networks}\label{chap:candidate-network} 

In this section we present the details on our method for generating and ranking Candidate Joining Networks (CJNs), which represent different interpretations of the the keyword query. We recall that our definition of CJNs expands on the definition presented in \cite{Hristidis@VLDB02DISCOVER} to support keywords referring to schema elements.

The generation of CJNs uses a structure we call a \emph{Schema Graph}. In this graph, there is a node representing each relation in the database and the edges correspond to the \emph{referential integrity constraints} (RIC) in the database schema. In practice, this graph is built in a preprocessing phase based on information
gathered from the database schema.

\begin{definition}
	\label{def:schemagraph}
	Let $\mathcal{R} = \{R_1,\ldots,R_n\}$ be a set of relation schemas from the database. Let $E$ be a subset of the ordered pairs from $\mathcal{R}^2$ given by:
	\[E=\{ \langle R_a,R_b \rangle| \langle R_a,R_b \rangle \in \mathcal{R}^2 \wedge R_a \neq R_b \wedge R\!I\!C(R_a,R_b) \geq 1 \}\]
	where $R\!I\!C(R_a,R_b)$ gives the number of \textit{Referential Integrity Constraints} from a relation $R_a$ to
	a relation $R_b$. We say that a \textit{\textbf{schema graph}} is an ordered pair $G_S=\langle \mathcal{R},E \rangle$, where $\mathcal{R}$ is the set of vertices (nodes) of $G_S$, and $E$ is the set of edges of $G_S$.
\end{definition}

\begin{example}
	\label{ex:schemagraph}
	Considering the sample movie database introduced in Figure~\ref{fig:imdb_instance}, our method generates the schema graph bellow.
	\begin{alignat*}{2}
	&G_S=<\ &\{&PERSON,MOVIE,CASTING,CHARACTER,ROLE\},\\
	&&\{&\langle CASTING,PERSON\rangle ,\langle CASTING,MOVIE\rangle ,\\
	&&&\langle CASTING,CHARACTER\rangle ,\langle CASTING,ROLE\rangle \}>
	\end{alignat*}
	
	In Figure~\ref{fig:sgmovie}\add{,} we represent a graphical illustration of $G_S$.
	
	\begin{figure}[htb]
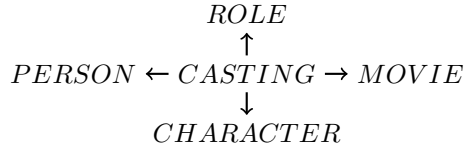

		\centering
		\tikz{
			\node[main node] (1) {$PERSON$};
			\node[main node] (2) [right = 3mm of 1] {$CASTING$};
			\node[main node] (3) [right = 3mm of 2] {$MOVIE$};
			\node[main node] (4) [below = 3mm of 2] {$CHARACTER$};
			\node[main node] (5) [above = 3mm of 2] {$ROLE$};
			\path[draw,thick,->]
			(2) edge node {} (1)
			(2) edge node {} (3)
			(2) edge node {} (4)
			(2) edge node {} (5)
		}
		\caption{A schema graph for the sample movie database of Figure~\ref{fig:imdb_instance}}
		\label{fig:sgmovie}
	\end{figure}
	
\end{example}

Once we defined the schema graph, we can introduce an important concept, the \emph{Joining Network of Keyword Matches} (JNKM). Intuitively, a joining network of keyword matches $J$ contains every KM from a query match $M$. $J$ may also contain some free-keyword matches for the sake of connectivity. Finally, $J$ is a connected graph that is structured according to the schema graph $G_S$. The definition of candidate joining networks is given as follows:

\begin{definition}
	\label{def:candidatenetwork}
	Let $M$ be a query match for a keyword query $Q$. Let $G_S$ be a schema graph. Let $F$ be a set of keyword-free matches from the relations of $G_S$. Consider a graph of keyword matches $J = \langle \mathcal{V}, E\rangle$, where $\mathcal{V}$ and $E$ are the vertices and edges of $J$. We say that $J$ is a \textbf{joining network of keyword matches} from $M$ over $G_S$ if the following conditions hold:
	\begin{alignat*}{2}
	&i) && \mathcal{V} = M  \cup F\\
	&ii)&& \forall K\!M_i \in \mathcal{V} : \exists \langle  K\!M_a,K\!M_b \rangle \in  E | \ i \in \{a,b\} \\
	&iii)&&  \forall \langle K\!M_a,K\!M_b \rangle \in E \implies \exists \langle R_a, R_b \rangle \in G_S\\
	\end{alignat*}
\end{definition}

For the sake of simplifying the notation, we will use a graphical illustration to represent CJNs, which is shown in Example~\ref{ex:non-minimal-candidate-networks}. 

\begin{example}
	\label{ex:non-minimal-candidate-networks}
	Considering the query match $M_1$ previously generated in Example~\ref{ex:querymatch}, the following JNKMs can be generated:
	\[
	\tikz{
		\node[main node] (1) {$J_1=  PERSON^V[name^{ \{will,smith\} }]$};
		\node[main node] (2) [right = 3mm of 1] {$CASTING$};
		\node[main node] (3) [right = 3mm of 2] {$MOVIE^S[self^{\{films\}}]$};
		\path[draw,thick,->]
		(2) edge node {} (1)
		(2) edge node {} (3)
	}
	\]
	\[
	\tikz{
		\node[main node] (1) {$J_2=  PERSON^V[name^{ \{will,smith\} }]$};
		\node[main node] (2) [right = 3mm of 1] {$CASTING$};
		\node[main node] (3) [right = 3mm of 2] {$MOVIE^S[self^{\{films\}}]$};
		\node[main node] (4) [below = 3mm of 2] {$CHARACTER$};
		\path[draw,thick,->]
		(2) edge node {} (1)
		(2) edge node {} (3)
		(2) edge node {} (4)
	}
	\]
	
	The JNKMs $J_1$ and $J_2$ cover the query match $M_1$.
	The interpretation of $J_1$ looks for the \emph{movies} of the \emph{person} \textit{will smith}.  $J_2$ looks for the \emph{movies} of the \emph{person} \textit{will smith} and which \emph{character} will smith played in these movies.
\end{example}

Notice that a JNKM might have unnecessary information for the keyword query, which was the case of $J_2$ presented in Example~\ref{ex:non-minimal-candidate-networks}. One approach to avoid generating unnecessary information is to generate Minimal Joining Networks of Keyword Matches(MJNKM), which are addressed in Definition~\ref{def:minimalcandidatenetwork}. Roughly, a MJNKM cannot have any keyword-free match as a leaf, that is, a keyword-free match incident to a single edge.

\begin{definition}
	\label{def:minimalcandidatenetwork}
	Let $G_S$ be a schema graph. Let $M$ be a query match for a query Q.
	We say that $J=\langle \mathcal{V}, E \rangle$ from $M$ over $G_S$ is \textbf{minimal joining network of keyword matches} (MJNKM) if, and only if, the following condition holds:
	\[ \forall K\!M_i \in \mathcal{V}\ (\ \exists! \langle K\!M_a,K\!M_b \rangle \in E | i \in \{a,b\} \implies K\!M_i \neq R_i^S[\ ]^V[\ ]\ ) \]
\end{definition}

\begin{example}
	\label{ex:minimal-candidate-networks}
	Considering the query match $M_2$ previously generated in Example~\ref{ex:querymatch}, the following MJNKMs can be generated:
	\[
	\tikz{
		\node[main node] (1) {$J_3= PERSON^V[name^{ \{smith\} }]$};
		\node[main node] (2) [right = 3mm of 1] {$CASTING$};
		\node[main node] (3) [right = 3mm of 2] {$PERSON^V[name^{ \{will\} }]$};
		\node[main node] (4) [below = 3mm of 2] {$MOVIE^S[self^{\{films\}}]$};
		\path[draw,thick,->]
		(2) edge node {} (1)
		(2) edge node {} (3)
		(2) edge node {} (4)
	}
	\]
		
\end{example}

Another issue that a JNKM might have is representing an inconsistent interpretation. For instance, it is impossible for $J_3$ presented in Example~\ref{ex:minimal-candidate-networks} to return any results from the database.
By Definition~\ref{def:value-keyword-match}, the VKMs $PERSON^V[name^{ \{will\} }]$ and $PERSON^V[name^{ \{smith\} }]$ are disjoint. However, a tuple from $CASTING$ cannot refer to two different tuples of $PERSON$. Thus $J_3$ is inconsistent.
We notice that previous work in literature for CJN generation had addressed this kind of inconsistency  \cite{Hristidis@VLDB02DISCOVER,Oliveira@ICDE18MatCNGen}. 
They did not, however, consider the situation in which there exist more than one RIC from one relation to another. In contrast, based on the theorems and definitions presented in \cite{Hristidis@VLDB02DISCOVER}, {\metodo} proposes a novel approach for checking consistency in CJNs that support such scenarios. 
Theorem~\ref{thm:sound-cjn} presents a criterion that determines when a JNKM is \emph{sound}, that is, it can only produce JNTs that do not have more than one occurrences of a tuple. The proof of Theorem~\ref{thm:sound-cjn} is presented in
\ref{apx:sound-thm}.

\begin{restatable}{theorem}{soundthm}
    \label{thm:sound-cjn}
    Let $G_S = \langle \mathcal{R}, E_G \rangle$ be a schema graph.
    Let $J =\langle \mathcal{V}, E_J \rangle$ be a joining network of keyword matches.
    We say that $J$ is \textbf{sound}, that is, it does not have more than one occurrences of the same tuple for every instance of the database if, and only if, the following condition holds
    $\forall K\!M_a \in~\mathcal{V}, \forall \langle R_a, R_b \rangle \in E_{G}:$
    \[
        R\!I\!C(R_a,R_b) \geq |\{KM_c| 
        \langle K\!M_a, K\!M_c \rangle 
        \in E_{J} \wedge R_c=R_b \}|
    \]
    where $R\!I\!C(R_a,R_b)$ indicates the number of \textit{Referential Integrity Constraints} from a relation $R_a$ to
	a relation $R_b$.
\end{restatable}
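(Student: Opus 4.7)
The plan is to prove the biconditional by treating each direction separately, with the central idea being that every edge of a joining network of keyword matches is, in the underlying SQL translation, instantiated by a specific referential integrity constraint (FK). Two edges incident to the same node $KM_a$ and targeting the same relation $R_b$ must be instantiated by two \emph{distinct} FKs, otherwise both target KMs are forced, in every JNT, to hold the same tuple — precisely a duplicate tuple occurrence.

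For the direction ``RIC condition $\Rightarrow$ $J$ is sound'', I would start by fixing a node $KM_a$ with $R_a$ as its underlying relation and considering the multiset of its $R_b$-targeted neighbors, for each $R_b$ such that $\langle R_a, R_b \rangle \in E_G$. The hypothesis gives that this multiset has size at most $R\!I\!C(R_a, R_b)$, so there exists an injective assignment from these edges to the set of FKs from $R_a$ to $R_b$. Doing this for every node yields a global edge-to-FK labeling of $J$. Under this labeling, the translation of $J$ into relational algebra joins forces each KM position to be computed by following a \emph{distinct} FK path from $KM_a$, and therefore no two KM positions are structurally forced to hold the same tuple for any instance. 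This is what the paper calls soundness.

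For the contrapositive direction ``$J$ is sound $\Rightarrow$ RIC condition'', I would assume for contradiction that for some $KM_a$ and some $R_b$, the number of $R_b$-targeted neighbors strictly exceeds $R\!I\!C(R_a, R_b)$. Any edge-to-FK labeling must then, by pigeonhole, assign the same FK $f$ to two distinct edges $\langle KM_a, KM_{c_1} \rangle$ and $\langle KM_a, KM_{c_2} \rangle$. In the relational evaluation, both $KM_{c_1}$ and $KM_{c_2}$ are obtained by following $f$ from the same $R_a$-tuple chosen for $KM_a$, which yields the same $R_b$-tuple in both positions for every instance in which the JNT is non-empty. I can then exhibit an instance — obtained by populating $R_a$ with a single tuple whose $f$-image lies in $KM_{c_1} \cap KM_{c_2}$, or by taking any instance compatible with the remaining KMs — that produces a JNT with the same tuple appearing twice, contradicting soundness.

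The main obstacle I anticipate is the careful handling of the edge-to-FK assignment: the definition of a JNKM (Definition~\ref{def:candidatenetwork}) only requires that $\langle R_a, R_b \rangle \in G_S$, without pinning down which of the possibly several FKs between $R_a$ and $R_b$ is used by a given edge. The proof must therefore separate the combinatorial fact (existence of an injective assignment, equivalent to Hall-type reasoning, but here trivially reducing to a per-node counting argument) from the semantic consequence (that distinct FK labels yield independent KM positions, while a repeated FK label structurally forces duplicate tuples). A secondary subtlety is the vacuous case in which disjointness of the underlying VKMs makes the JNT empty; I would handle this by construing soundness as the absence of \emph{structurally forced} duplication, so that the pigeonhole argument in the second direction suffices regardless of whether the particular instance yields empty JNTs.
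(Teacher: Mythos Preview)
Your contrapositive direction (``$\neg$RIC condition $\Rightarrow$ $\neg$sound'') is essentially the paper's argument: the paper writes out the chain of joins $T_1,\ldots,T_{n+1}$ exhausting the $n$ available foreign keys from $R_a$ to $R_b$, observes that the $(n{+}2)$-th join must reuse some $f_{1,x}$, and concludes $k_x = k_{n+2}$, i.e., the same referenced tuple occurs twice in every resulting JNT. Your pigeonhole phrasing is the same idea, just stated more abstractly.

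Where you differ is that you actually sketch the other direction (``RIC condition $\Rightarrow$ sound'') via an injective edge-to-FK labeling and argue that distinct FK paths yield independent KM positions. The paper's proof in Appendix~\ref{apx:sound-thm} does not address this direction at all; it stops after the pigeonhole argument. So your proposal is strictly more complete than the paper's own proof. Your explicit discussion of the edge-to-FK assignment ambiguity in Definition~\ref{def:candidatenetwork} and of the vacuous-JNT case are also points the paper leaves implicit. The trade-off is that the paper's concrete join-chain calculation makes the duplication mechanism very explicit at the SQL level, whereas your injective-labeling argument is cleaner combinatorially but requires the reader to accept the semantic step (distinct FKs $\Rightarrow$ no forced duplication) without the worked example.
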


Example~\ref{ex:rics-joining} presents a JNKM that is sound, although it would be deemed not sound by previous approaches \cite{Hristidis@VLDB02DISCOVER,Oliveira@ICDE18MatCNGen}.

\begin{example}
	\label{ex:rics-joining}
	Consider a simplified excerpt from the MONDIAL database \textnormal{\cite{May@99MONDIAL}},
	presented in Figure~\ref{fig:mondial_instance}. As there exists 2 RICs from the relation
	$BORDER$ to $COUNTRY$, represented by the attributes \emph{Ctry1\_Code} e \emph{Ctry2\_Code}, a tuple from $BORDER$ can be joined to at most two distinct tuples from $Country$, which is the case of $\mreftuple{colombia}\bowtie\mreftuple{co_br}\bowtie\mreftuple{brazil}$. Thus, the following MJNKM is sound:
	\[
	\tikz{
		\node[main node] (1) {$J_4=  COUNTRY^V[name^{\{colombia\}}]$};
		\node[main node] (2) [right = 3mm of 1] {$BORDER$};
		\node[main node] (3) [right = 3mm of 2] {$COUNTRY^V[name^{ \{brazil\} }]$};
		\path[draw,thick,->]
		(2) edge node {} (1)
		(2) edge node {} (3)
	}
	\]
	\begin{figure}[!htb]
    	\centering
    	\begin{adjustbox}{max width=\textwidth}			
    		\begin{tabular}{lll}
    			\begin{tabular}{rcll}
    & 
    \multicolumn{3}{l}{
        \textbf{COUNTRY}
    }                                                                            
    \\
    \cline{2-4} 
    \multicolumn{1}{r|}{}
    &
    \multicolumn{1}{c|}{
        \textbf{Code}
    } 
    &
    \multicolumn{1}{l|}{
        \textbf{Name}
    }
    &
    \multicolumn{1}{c|}{
        \textbf{Capital\_ID}
    }
    \\
    \cline{2-4} 
\multicolumn{1}{r|}{\deftuple{colombia}} & \multicolumn{1}{c|}{CO}            & \multicolumn{1}{l|}{Colombia}      & \multicolumn{1}{c|}{1}           \\
\multicolumn{1}{r|}{\deftuple{brazil}}   & \multicolumn{1}{c|}{BR}            & \multicolumn{1}{l|}{Brazil}        & \multicolumn{1}{c|}{2}         \\
\multicolumn{1}{r|}{\deftuple{peru}}     & \multicolumn{1}{c|}{PE}            & \multicolumn{1}{l|}{Peru}          & \multicolumn{1}{c|}{3}             \\ \cline{2-4} 
\end{tabular}
    			&
    			\begin{tabular}{rccc}
                                                        & \multicolumn{3}{l}{\textbf{BORDER}}                                                                                                \\ \cline{2-4} 
\multicolumn{1}{r|}{}                                   & \multicolumn{1}{c|}{\textbf{Ctry1\_Code}} & \multicolumn{1}{c|}{\textbf{Ctry2\_Code}} & \multicolumn{1}{c|}{\textbf{Length}} \\ \cline{2-4} 
\multicolumn{1}{r|}{\deftuple{co_br}} & \multicolumn{1}{c|}{CO}                      & \multicolumn{1}{c|}{BR}                      & \multicolumn{1}{c|}{1643}            \\
\multicolumn{1}{r|}{\deftuple{pe_br}} & \multicolumn{1}{c|}{PE}                      & \multicolumn{1}{c|}{BR}                      & \multicolumn{1}{c|}{1560}            \\ \cline{2-4} 
\\
\end{tabular}
    			&
    			\begin{tabular}{rccc}
                      & \multicolumn{3}{l}{\textbf{CITY}}                                                                                \\ \cline{2-4} 
\multicolumn{1}{r|}{} & \multicolumn{1}{c|}{\textbf{ID}} & \multicolumn{1}{c|}{\textbf{Name}} & \multicolumn{1}{c|}{\textbf{Population}} \\ \cline{2-4} 
\multicolumn{1}{r|}{\deftuple{bogota}} & \multicolumn{1}{c|}{1}          & \multicolumn{1}{c|}{Bogota}        & \multicolumn{1}{c|}{1643}                \\
\multicolumn{1}{r|}{\deftuple{brasilia}} & \multicolumn{1}{c|}{2}          & \multicolumn{1}{c|}{Brasilia}      & \multicolumn{1}{c|}{1560}                \\
\multicolumn{1}{l|}{\deftuple{lima}} & \multicolumn{1}{c|}{3}          & \multicolumn{1}{c|}{Lima}          & \multicolumn{1}{c|}{1560}                \\ \cline{2-4} 
\end{tabular}
    		\end{tabular}
    	\end{adjustbox}	
    	\caption{A simplified excerpt from MONDIAL}
    	\label{fig:mondial_instance}
    \end{figure}
\end{example}

Finally, Definition~\ref{def:sound-candidate-network} describes a Candidate Joining Network (CJN), which is roughly a sound minimal joining network of keyword matches.

\begin{definition}
	\label{def:sound-candidate-network}
	Let $M$ be a query match for the keyword query $Q$. Let $G_S$ be a schema graph. Let $C\!J\!N$ be a joining network of keyword matches from $M$ over $G_S$ given by $C\!J\!N=\langle \mathcal{V}, E \rangle$.
	We say that $C\!J\!N$ is a \textbf{candidate joining network} if, and only if, $C\!J\!N$ is minimal and sound.
\end{definition}

\begin{example}
	\label{ex:sound-candidate-networks}
	Considering the query match $M_2$ previously generated in Example~\ref{ex:querymatch}, the following CJN can be generated:
	\[
	\tikz{
		\node[main node] (1) {$C\!J\!N_1=  MOVIE^S[self^{\{films\}}]$};
		\node[main node] (2) [right = 3mm of 1] {$CASTING$};
		\node[main node] (3) [right = 3mm of 2] {$PERSON^V[name^{ \{will\} }]$};
		\node[main node] (4) [below = 3mm of 1] {$CASTING$};
		\node[main node] (5) [right = 3mm of 4] {$PERSON^V[name^{ \{smith\} }]$};
		\path[draw,thick,->]
		(2) edge node {} (1)
		(2) edge node {} (3)
		(4) edge node {} (1)
		(4) edge node {} (5)
	}
	\]
	The candidate joining networks $C\!J\!N_1$ covers the query match $M_2$. $C\!J\!N_1$ is a minimal and sound JNKM. The interpretation of $C\!J\!N_1$ searches for the \textit{movies} where both \textit{persons} ``will'' (e.g. Will Theakston) and ``smith'' (e.g. Maggie Smith) participate in. The two keyword-free matches from the $CASTING$ are treated as different nodes in the candidate joining network $C\!J\!N_1$.
\end{example}

The details on how we generate CJNs in {\metodo} are described by the CNKMGen Algorithm in 
\ref{apx:cnkmgen}.



\subsection{Candidate Joining Network Ranking} \label{sec:cn-ranking}

In this section, we present CJNKMRank, a novel ranking of CJNs based on the ranking of QMs. This ranking is necessary because often many CJNs are generated, yet, only a few of them are indeed useful to produce relevant answers.

We present in Section~\ref{sec:query-match-ranking} a QM ranking that advances the majority of the features present in the ranking of CJNs of other proposed systems, such as CNRank~\cite{Oliveira@ICDE15CNRank}. Thus, we can exploit the scores of the QMs to rank the CJNs. For this reason, CJNKMRank provides a straightforward yet effective ranking of candidate joining networks.

CJNKMRank is described in Algorithm~\ref{alg:cnkmrank}. Roughly, it uses the ranking of QMs adding a penalization for large CJNs. Therefore, the score of a candidate joining network $C\!J\!N_M$ from a query match $M$ is given by:
\[
score(C\!J\!N_M) = score(M) \times  \frac{1}{|C\!J\!N_M|}
\]

To ensure that CJNs with the same score are placed in the same order that they were generated, in Line~\ref{line:sorting}, we used a stable sorting algorithm \cite{Cormen@09Algorithms}.

\begin{algorithm}[htb]
	\caption{CNKMRank($Q\!M$)}
	\label{alg:cnkmrank}
	\KwIn{A set of candidate networks $C\!N$
	}
	\KwOut{The set of candidate networks $R\!C\!N$}
	$R\!C\!N \leftarrow [\ ]$\;
	\For{$C \in R\!C\!N$}{
		\Let{$M$ be the query match used to generate $C$}
		
		cn\_score = $score(M) /|C|$
		
		$R\!C\!N\!$.append( $\langle cn\_score, C \rangle $ )
	}
	
	\textbf{Sort }$R\!C\!N$ in descending order\; \label{line:sorting}
	
	\Return{$R\!C\!N$}	
\end{algorithm}

\subsection{Candidate Joining Network Pruning} \label{sec:cn-pruning}
In this section we present an \emph{eager evaluation} strategy for pruning CJNs. 
Even if CJNs contain valid interpretations of the keyword query, some of them may fail to produce any JNTs as a result. Thus, we can improve the results of our CJN generation and ranking if by pruning what we call \emph{void} CJNs, which are CJNs with no JNTs in their results.

\begin{example}
	\label{ex:empty-candidate-networks}
	Considering the database instance of Figure~\ref{fig:imdb_instance} and the keyword query ``will smith films'', the following CJNs can be generated:
	\[
	\tikz{
	    \node[main node] (2) {$C\!J\!N_2 = MOVIE^S[self^{\{films\}}]^V[name^{ \{smith\} }]$};
	    \node[main node] (3) [right = 3mm of 2] {$CASTING$};
		\node[main node] (4) [below = 3mm of 3]	{$PERSON^V[name^{ \{will\}}]$};
		\path[draw,thick,->]
		(3) edge node {} (2)
		(3) edge node {} (4)
		
	}
	\]
	\[
	\tikz{
		\node[main node] (1) {$C\!J\!N_3=  MOVIE^S[self^{\{films\}}]$};
		\node[main node] (2) [right = 3mm of 1] {$CASTING$};
		\node[main node] (3) [right = 3mm of 2] {$PERSON^V[name^{ \{will\} }]$};
		\node[main node] (4) [below = 3mm of 1] {$CASTING$};
		\node[main node] (5) [right = 3mm of 4] {$CHARACTER^V[name^{ \{smith\} }]$};
		\path[draw,thick,->]
		(2) edge node {} (1)
		(2) edge node {} (3)
		(4) edge node {} (1)
		(4) edge node {} (5)
	}
	\]
		
	
	The interpretation of $C\!J\!N_2$ looks for the \textit{movies} whose name contains the  keyword ``smith''
	(e.g. ``Mr. \& Mrs. Smith'') and in which a person whose contains ``will'' (e.g. ``Will Theakston'') participate in.
	The interpretation of $C\!J\!N_3$ looks for the \textit{movies} where a person whose name contains ``will'' (e.g. ``Will Theakston'') played the character ``smith'' (e.g. ``Jane Smith'').
	Notice that although the candidate joining networks $C\!J\!N_2$ and $C\!J\!N_3$ both provide valid interpretations for the keyword query, they do not produce any tuples as a result in the given database instance.
	
\end{example}

As most of the previous work does not rank CJNs but only evaluates them and ranks their resulting JNTs instead, the pruning of void CJNs has previously never been addressed.
{\metodo} employs a pruning strategy that evaluates CJNs as soon as they are generated, pruning the void ones. This strategy, as demonstrated in our experiments, can significantly improve the quality of the CJN generation process, particularly in scenarios where the schema graph contains a large number of nodes and edges.

For instance, one of the datasets we use in our experiments, the MONDIAL database, contains a large number of relations and relational integrity constraint (RICs). 
This results in a schema graph with several nodes and edges, which, intuitively, incur a large number of possible CJNs for a single QM.
In contrast, we discovered that such schema graphs are prone to produce a large number of void CJNs.
In particular, while approximately 20\% of the keyword queries used in our experiments required us to consider 9 CJNs per QM, the eager evaluation strategy reduced this value to 2 CJNs per QM.

Notice, however, that to find if some CJN is void, we must execute it as an SQL in the DBMS, which incurs an additional cost and an increase in the CJN generation time. 
Despite that, we notice in our experiments that the eager evaluation strategy does not necessarily hinder the performance of a R-KwS system.   In fact, the reducing the number of CJNs per QM alone improves the system efficiency because this parameter influences the CJN generation process.
Furthermore, the eager evaluation advances the CJN evaluation, which is already a required step in the majority of R-KwS systems in the related work.
Lastly, we can set a maximum number of CJNs to probe during the eager evaluation, which limits the increase in CJN generation time.
	
	\section{Experiments}\label{chap:experiments} 

In this section, we report a set of experiments performed using datasets and query sets previously used in similar experiments reported in the literature. Our goal is to evaluate the quality of the CJN Ranking, the quality QM ranking, and how we can improve the CJN Generation using an Eager Evaluation strategy.

\subsection{Experimental Setup}
\label{subsec:setup}

\paragraph{System Details}
We ran the experiments on a Linux machine running Arch Linux (LSB 1.4, 64-bit, 16GB RAM, Intel\textsuperscript{\textregistered} Core\textsuperscript{\texttrademark} i5-4670 CPU @ 3.40GHz). 
We used PostgreSQL as the underlying RDBMS with a default configuration. All implementations were made in Python 3.


\paragraph{Datasets}

\newcommand{\ftna}{\footnote{https://www.imdb.com/}}
\newcommand{\ftnb}{\footnote{https://www.cia.gov/library/publications/the-world-factbook/}}
\newcommand{\fttd}{\footnote{https://towardsdatascience.com/understanding-boxplots-5e2df7bcbd51}}

For all the experiments, we used two datasets, \emph{IMDb} and \emph{MONDIAL}, which were
used for the experiments performed with previous R-KwS systems and methods 
\cite{Coffman@CIKM10Framework,Coffman@TKDE12Evaluation,Luo@SIGMOD07Spark,Oliveira@ICDE15CNRank,Oliveira@ICDE18MatCNGen,Oliveira@TKDE20}.The IMDb dataset is a subset of the well-known Internet Movie Database (IMDb)\ftna, which comprises information related to films, television shows and home videos – including actors, characters, etc.  The MONDIAL dataset \cite{May@99MONDIAL} comprises geographical and demographic information from the well-known \emph{CIA World Factbook}\ftnb, the International Atlas, the TERRA database, and other web sources. 
The two datasets distinct characteristics. Although the The IMDb dataset is larger, the MONDIAL dataset is more complex, with more relations and relational integrity constraints (RICs). Table~\ref{tab:datasets} summarizes the details of each dataset.

\begin{table}[htb]
	\centering
	\caption{Datasets we used in our experiments} 
	\label{tab:datasets}
	\begin{adjustbox}{max width=\columnwidth}			
		\small
		\begin{tabular}{lrrrrr} 
	\toprule
	Dataset & Size(MB) & Relations &  Attributes & RIC & Tuples     \\ 
	\midrule
	MONDIAL & 13        & 28  & 48       & 38  & 17,115     \\
	IMDb    & 528      &    6  & 33     & 5   & 1,673,076 
\\
	\bottomrule
\end{tabular}	
	\end{adjustbox}	
\end{table}

\paragraph{Query Sets}

\newcommand{\ftqueries}{\footnote{\add{These are the queries 21-25.}} }

We used the query sets provided by Coffman \& Weaver~\cite{Coffman@CIKM10Framework} benchmark for the two datasets. 
However, we notice that several queries from both query sets do not have a clear intent, compromising the proper evaluation of the results, for instance, the ranking of CJNs. Therefore, for the sake of providing a more fair evaluation, we generated an additional for each original query set replacing queries that we consider unclear with equivalent queries with added schema references. 

As an example, consider the query ``\textit{Saint Kitts Cambodia}'' for the MONDIAL dataset, where \textit{Saint Kitts} and \textit{Cambodia} are the names of the two countries. 
There exist several interpretations of this keyword query, each of them with a distinct way to connect the tuples corresponding to these countries.
For example, one might look for shared religions, languages, or ethnic groups between the two countries. 
While all these interpretations are valid in theory, the relevant interpretation defined by Coffman \& Weaver~\cite{Coffman@CIKM10Framework} in their golden standard indicates that the query searches for organizations in which both countries are members. In this case, we replaced in the new query set with the query "Saint Kitts Cambodia Organizations". 

Table~\ref{tab:querysets} presents the query sets we used in our experiments, along with some of their features. Query sets whose names include the suffix ``-DI'' correspond to those in which we have replaced ambiguous queries as explained above. Thus, these queries sets have no ambiguous queries and they have a higher number of Schema References.

\begin{table}[htb]
	\centering
	\caption{Query sets we used in our experiments} 
	\label{tab:querysets}
	\begin{adjustbox}{max width=\textwidth}			
		\small
		\begin{tabular}{llcccc}
			\toprule
			Query Set & Target Dataset & Total Queries & Ambiguous Queries & Schema References \\
			\midrule
			IMDb & IMDb &  50 & 5  & 20 \\	
			IMDb-DI &  IMDb & 50 & -  & 25 \\	
			MOND & MONDIAL & 50 & 7  & 12 \\
			MOND-DI & MONDIAL & 50 & -  & 19 \\
			\bottomrule
		\end{tabular} 
	\end{adjustbox}	
\end{table}

\paragraph{Golden Standards}

The relevant interpretations for each query and its relevant SQL results are provided by Coffman \& Weaver~\cite{Coffman@CIKM10Framework}. We used them to evaluate the quality of our rankings of CJNs and QMs. 
We used the following procedure to obtain the golden standards for CJNs and QMs.
We took the relevant interpretation defined by Coffman \& Weaver~\cite{Coffman@CIKM10Framework} for each keyword query and manually identified the CJN that represents this relevant interpretation to define this CJN as the relevant one. Following that, we took the set of nodes from the relevant CJN that are not free-keyword matches. 
This set of KMs corresponds to the relevant QM.

\paragraph{Metrics}
We evaluate the ranking of CJNs and QMs using two metrics: Precision at position ranking $K$ ($P@K$) and Mean Reciprocal Rank~(MRR).

Given a keyword query $Q$ , the value of $P_Q@K$ is 1 if the target CJN for $Q$ appears in a position up to $K$ in the ranking, and 0 otherwise. $P@K$ is the average of $P_Q@K$, for all $Q$ in a query set.  With respect to MRR, given a keyword query $Q$, the value of the \emph{reciprocal ranking for $Q$}, $RR_Q$ is given by $\frac{1}{K}$, where $K$ is the rank position of the relevant result. Then, the MRR obtained for the queries in a query set is the average of $RR_Q$, for all $Q$ in the query set. Intuitively, the MRR metric measures how close the relevant results are from the first position of the ranking.

\paragraph{{\metodo} Setup} 
For the experiments we report here, we set a maximum size for QMs and CJNs of 3 and 5, respectively. Also, we consider three important parameters for running {\metodo}: $N_{QM}$, the maximum number of QMs considered from the QM ranking; $N_{CJN}$, the maximum number of CJNs considered from each QM; 
and $P_{CJN}$, the number of CJNs probed per QM by the eager evaluation. In this context, a \emph{setup} for {\metodo} is a triple $N_{QM}/N_{CJN}/P_{CJN}$. The most common setup we used in our experiments is
$5/1/9$, in which we take the top-5 QMs in the ranking, generate and probe up to 9 CJNs for each QM, 
and take only the first non-empty CJN, if any, from each QM. We call this the \emph{default setup}.
Later in this section, we will discuss how these parameters affect the effectiveness and the performance 
of {\metodo}, as well as why we use the default configuration.

All the resources, including source code, query sets, datasets and golden standards used in our experiments are available at \url{https://github.com/pr3martins/Lathe}.

\subsection{General Results} \label{sec:gresults}

In this section, we present general results regarding the CJN generation process. 
To have an idea of the full process, to obtain these results we do not apply any pruning strategies, and we took
all the QMs and CJNs generated. Table~\ref{tab:general-results} shows some statistics about to the CJN generation process for all query sets and datasets that we tested. Specifically, we show the maximum and average for the number of KMs, QMs, CJNs. This last number refers to the total number of CJNs generated for all QMs of each query.

\begin{table}[!htb]
    \centering
    \caption{Result Statistics for each Query Set and Dataset.}
    \begin{adjustbox}{max width=0.9\textwidth}	

\begin{tabular}{|l|r|r|r|r|r|r|}
\hline
~ & \multicolumn{2}{c|}{\textbf{Num. KMs}} & \multicolumn{2}{c|}{\textbf{Num. QMs}}  & \multicolumn{2}{c|}{\textbf{Num. CNs}}   \\ \hline
\multicolumn{1}{|l|}{Query sets} & \multicolumn{1}{c|}{Max} & \multicolumn{1}{c|}{Avg} & \multicolumn{1}{c|}{Max} & \multicolumn{1}{c|}{Avg} & \multicolumn{1}{c|}{Max} & \multicolumn{1}{c|}{Avg}  \\ \hline
\multicolumn{1}{|l|}{IMDb}    & 47   & 15.38 & 515 & 72.46 & 554 & 89.04 \\ \hline
\multicolumn{1}{|l|}{IMDb-DI} & 64   & 16.20 & 515 & 69.76 & 492 & 85.28 \\ \hline
\multicolumn{1}{|l|}{MOND}    & 8    & 3.16  & 9   &  2.11 & 33  & 6.96  \\ \hline
\multicolumn{1}{|l|}{MOND-DI} & 8    & 3.47  & 9   &  2.47 & 34  & 7.87  \\ \hline
\end{tabular}
    \end{adjustbox}
    \label{tab:general-results}
\end{table}

Overall, the number of KMs and QMs are higher in the case of IMDb. This is due to IMDb database having a higher number of tuples and the keywords being present in several relations and combinations.
However, despite having a ``dense'' schema graph, the MONDIAL dataset does not have a large number of tuples and the keywords are matched to only a few relations, resulting in a lower number of KMs and QMs generated. 

Regarding the number of CJNs, the results for the IMDb dataset are also higher in comparison with the results from the MONDIAL dataset. However, the ratio of the number CJNs to the number of QMs is higher in the results for the MONDIAL dataset, which is due to its more complex schema graph.

\subsection{Comparison with other R-KwS systems}

\newcommand{\ftquestqueries}{\footnote{Specifically, queries 01-20, 26-35 and 46-50.}}

In this experiment, we first compare {\metodo} with QUEST~\cite{Bergamaschi@VLDBDEMO13_QUEST}, the current state of art R-KwS system with support to schema references and then we also compare {\metodo} with several other R-KwS systems. Here, we used the default {\metodo} setup, that is, $5/1/9$. 
We compare our results to those published by the authors, which refer to the MONDIAL dataset, because we were unable to run QUEST due to the lack of code and enough details for implementing it
Figure~\ref{fig:quest-comparison-35queries} depicts the results for
the 35 queries supported by QUEST\ftquestqueries out of the 50 queries provided in the original query set.
The graphs show the recall and P@1 values 
for the raking produced by each system considering the golden standard supplied by 
Coffman \& Weaver~\cite{Coffman@CIKM10Framework}. 

\begin{figure}[!htb]
    \centering
    \begin{tabular}{rl}
         \includegraphics[width=.5\textwidth]{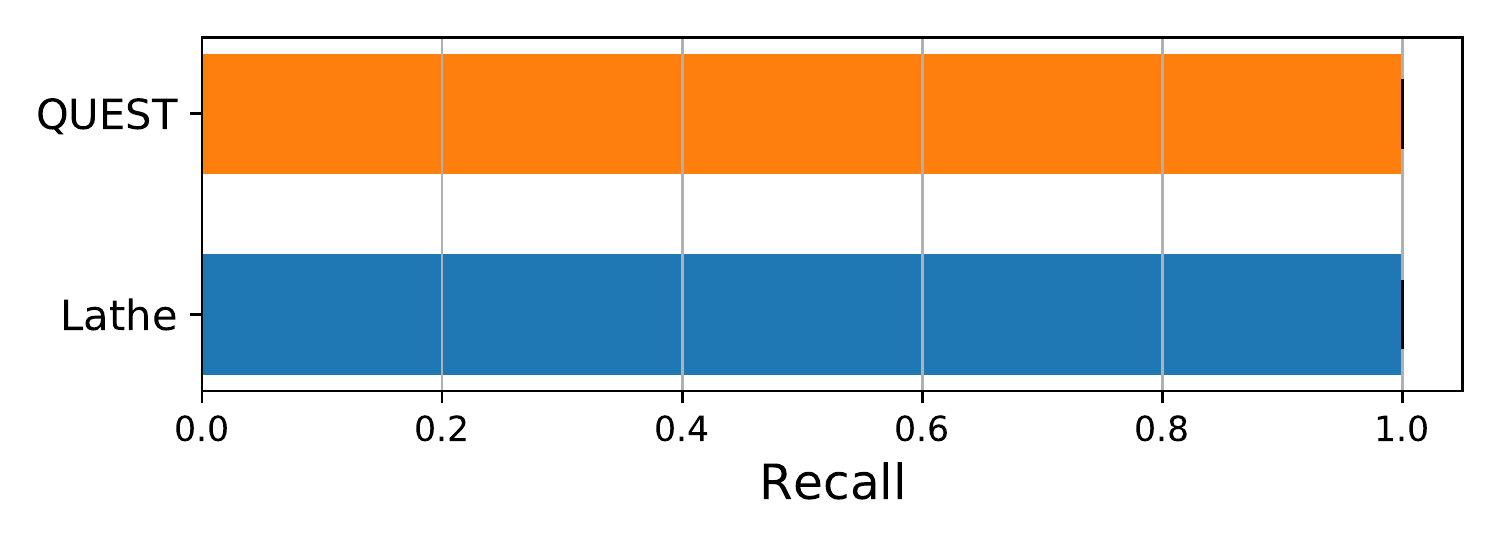}
         &
         \includegraphics[width=.5\textwidth]{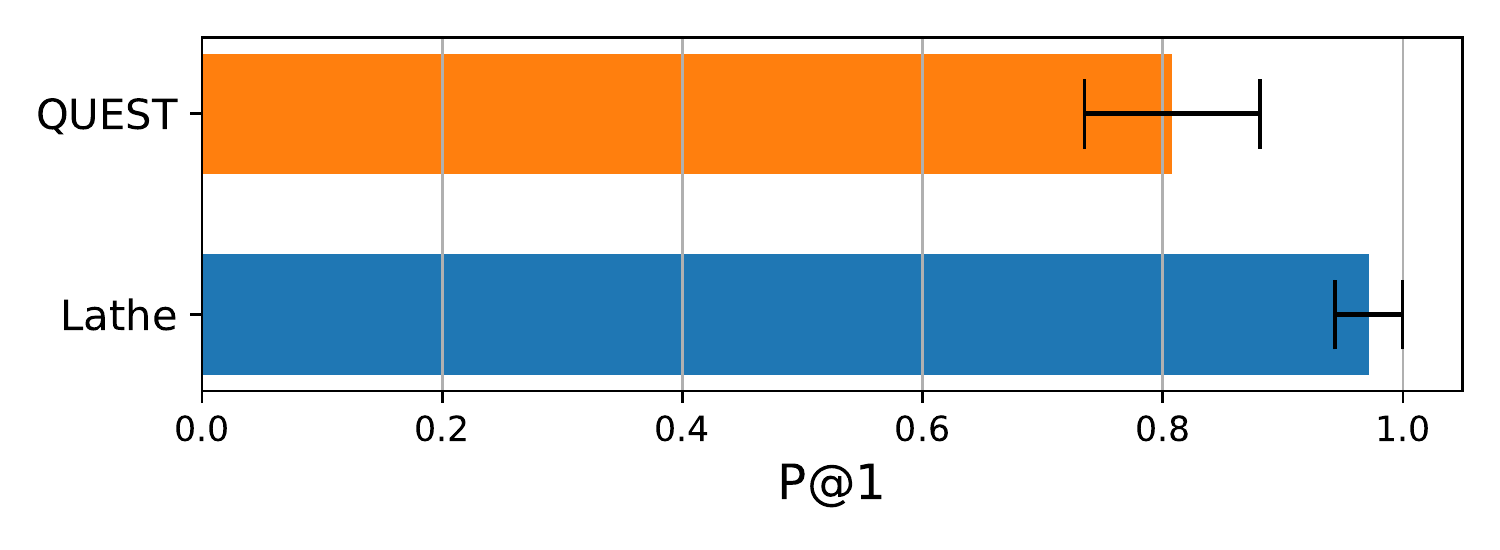}
    \end{tabular}
    \vspace*{-10pt}
    \caption{Comparison of {\metodo} with the QUEST system.}
    \label{fig:quest-comparison-35queries}
\end{figure}

Both systems achieved perfect recall; that is, all the correct solutions for the given keyword queries were retrieved. Concerning P@1, {\metodo} obtained better results than QUEST, with an average of 0.97 with a standard error of 0.03, which indicates that, in most cases, the correct solution
was the one corresponding to the CJN ranked as the first by {\metodo}.

Next, we compare the results obtained for Lathe with those published in the comprehensive evaluation published by Coffman \& Weaver~\cite{Coffman@CIKM10Framework} for the systems BANKS\cite{Aditya@VLDB02BANKS}, DISCOVER\cite{Hristidis@VLDB02DISCOVER}, DISCOVER-II\cite{Hristidis@VLDB03Efficient}, BANKS-II\cite{Kacholia@VLDB05Bidirectional}, DPBF\cite{Ding@ICDE07DPBF}, BLINKS\cite{He@SIGMOD07BLINKS} and STAR\cite{Kasneci@ICDE09STAR}. 
Because this comparison uses all 50 keyword queries from the MONDIAL dataset, we did not include QUEST in the comparison. Figure~\ref{plot:comparison-with-other-systems} shows the recall and P@1 values for the raking produced by each system when the golden standard provided by Coffman \& Weaver~\cite{Coffman@CIKM10Framework} is taken into account. 

\begin{figure*}[!htb]
     \centering
     \adjustbox{width=\textwidth}{
        \begin{tabular}{cc}
            \includegraphics[width=\linewidth]{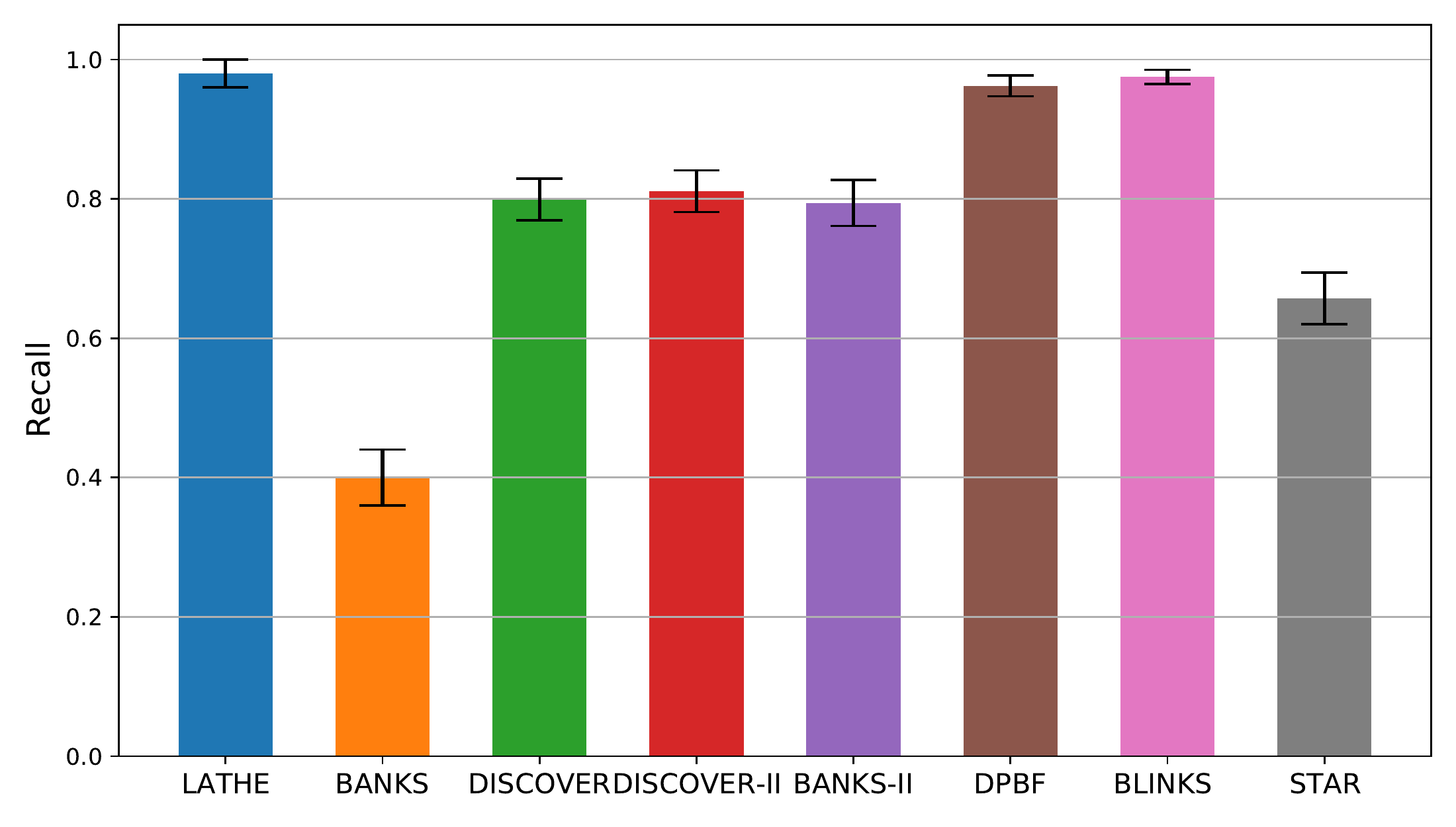}
              &
            \includegraphics[width=\linewidth]{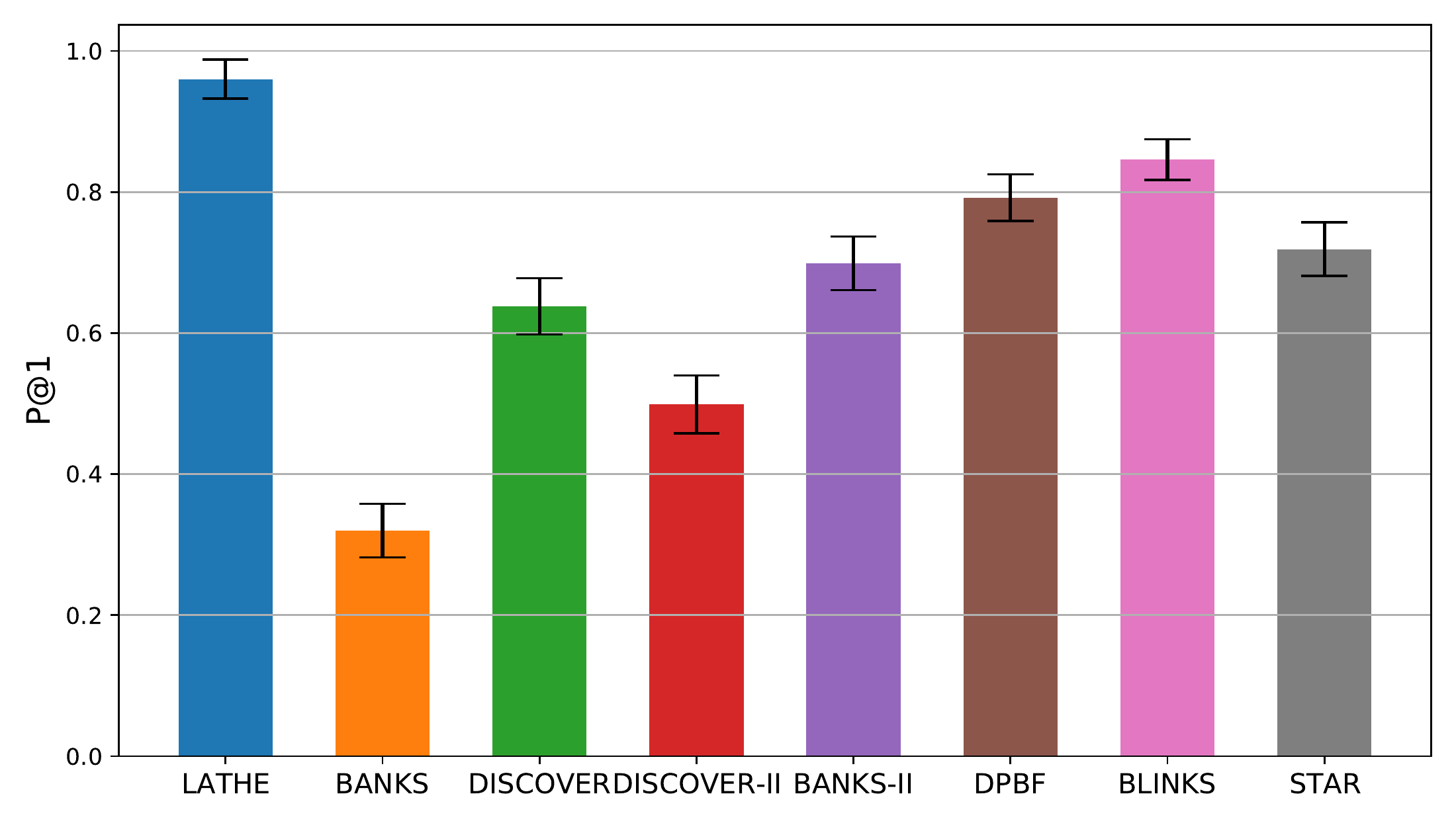}
        \end{tabular}
    }
    \caption{Comparison with other approaches using Recall and P@1 metrics.}
    \label{plot:comparison-with-other-systems}
\end{figure*}

Overall, Lathe achieved the best results in Recall and P@1 value. That the only systems that achieved similar recall, DPBF and BLINKS, are based on data graph, thus, require a materialization of the database. The difference between recall values of Lathe, DISCOVER, and DISCOVER-II is mainly due to not supporting schema references.
Regarding the P@1, Lathe obtained a value of 0.96 with a standard error of 0.03, which is significantly higher than the results for other systems. This difference in P@1 value, especially compared with DISCOVER and DISCOVER-II, is due to the novel ranking of QMs as well as an improved ranking of CJNs.

\subsection{Evaluation of Query Matches Ranking}

In this experiment, we evaluate the quality of QMs ranking according to the metrics MRR and $P@K$.
As shown by the results in Section~\ref{sec:gresults}, there can be many QMs depending on the query. 
As a result, we want to verify how effective the QMRank algorithm is at selecting the most likely correct QM from among those generated in this experiment. Figure~\ref{plot:ranking-query-matches}
shows the results obtained with $P@K$ up to the tenth ranking position and the MRR metric.

\begin{figure*}[!htb]
     \centering
     \adjustbox{width=\textwidth}{
        \begin{tabular}{m{\textwidth}m{\textwidth}}
             \includegraphics[width=\textwidth]{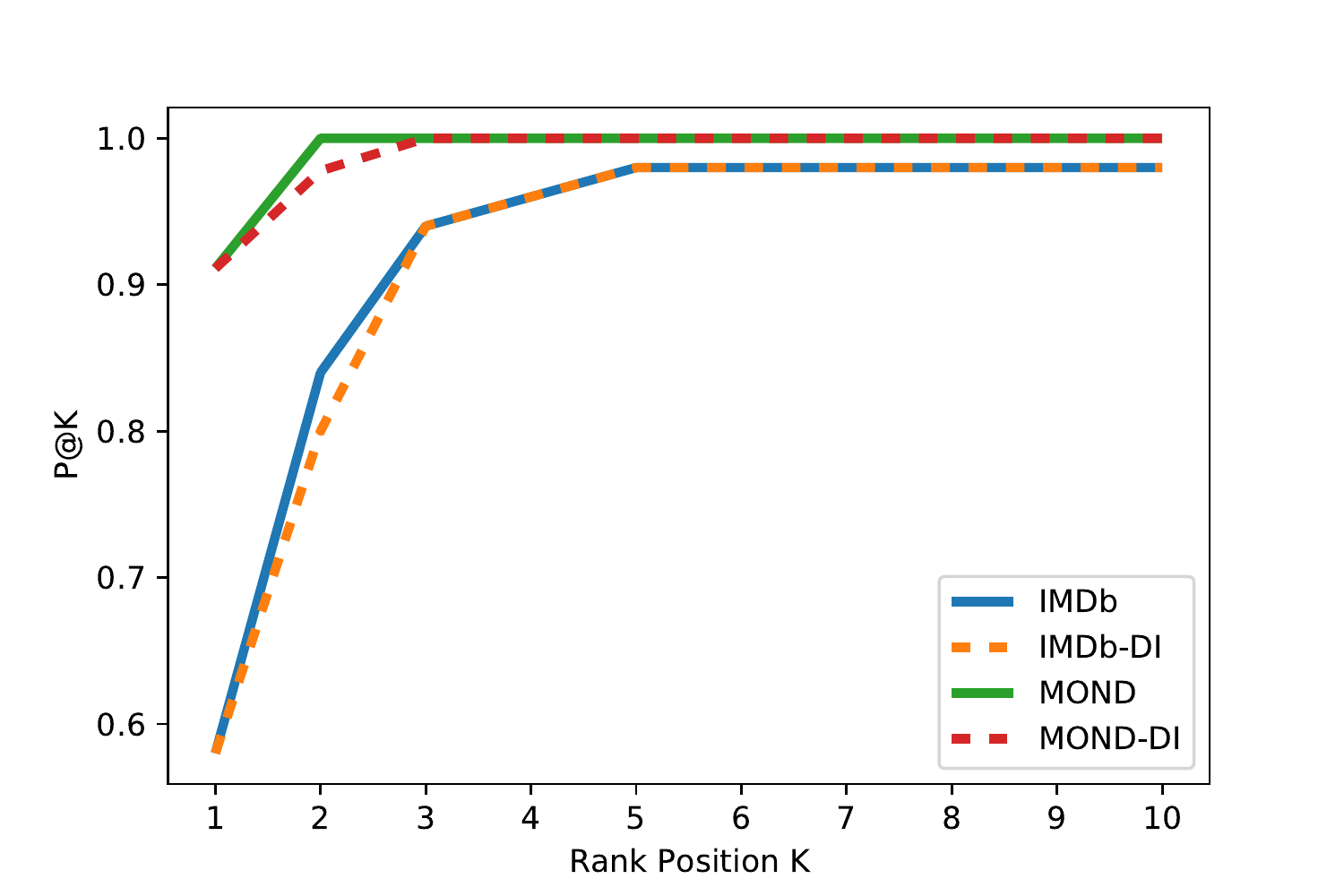}
             &
             \includegraphics[width=\textwidth]{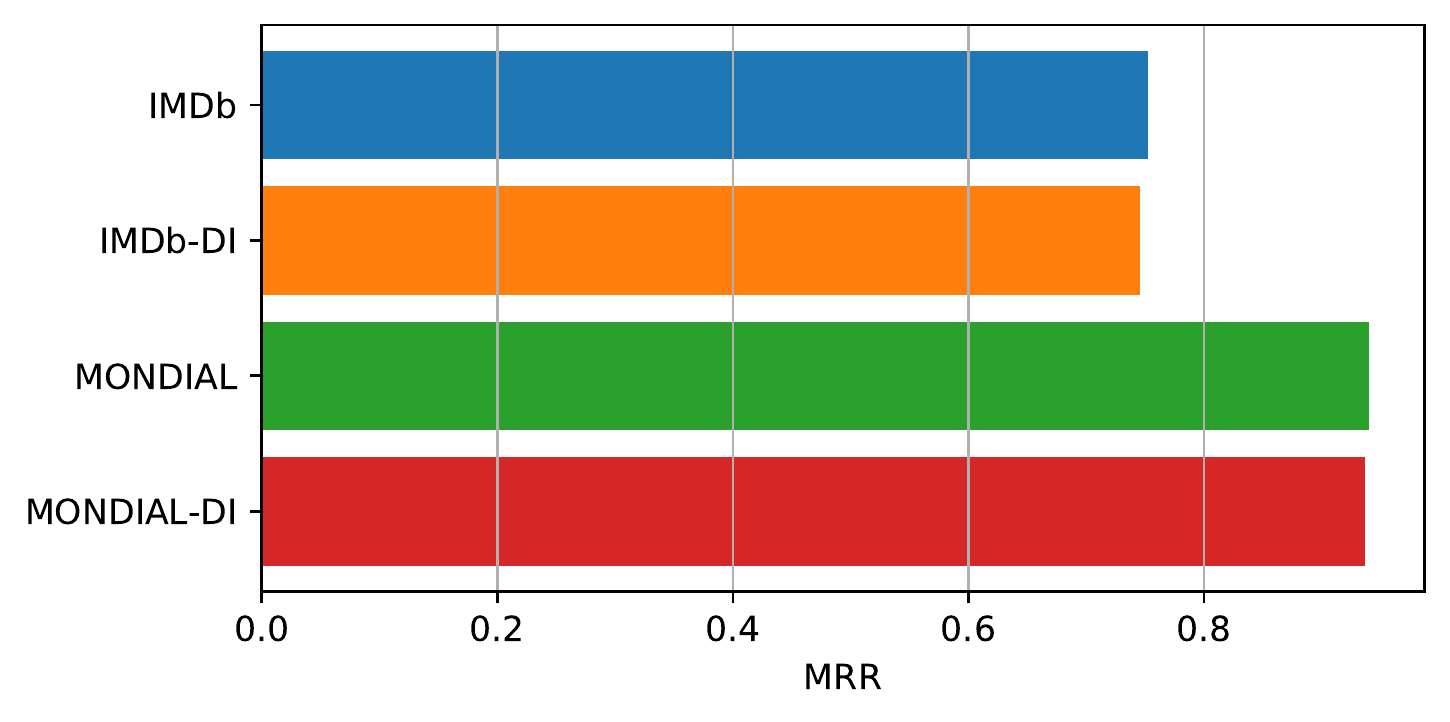}
        \end{tabular}
    }
    \vspace*{-10pt}
    \caption{Evaluation of Query Matches}
    \label{plot:ranking-query-matches}
\end{figure*}

For all query sets, in most cases, the correct QM is at least in the fifth ranking 
position since the peak result is from K=5 on. In MOND and MOND-DI, the relevant QM is at least in the third position for all queries. 
The IMDb and IMDb-DI curves never reached a P@K value of 1.0. 
This occurred because Lathe was unable to generate the correct QM for one of the 50 queries in these query sets. Thus, this QM is never ranked. 
This is an unusual query consisting of a full movie title with
26 terms that is highly ambiguous and was incorrectly matched
to the movie description by our system. In terms of MRR, {\metodo} obtained $0.75$ for both IMDb and IMDb-DI, as well as $0.96$ and $0.95$ for MOND and MOND-DI, respectively.
This indicates that the relevant QM is often found in the top positions of the ranking.
Notice that, the quality of the QM ranking indirectly impacts the ranking of CJNs.
Based these results, we can define the parameter $N_{QM}$, the maximum number of QMs 
to be selected before the CJN generation itself, and lead to use $N_{QM}=5$ in the default setup.


\subsection{Evaluation of the Candidate Joining Network Ranking} \label{sec:evaluation-cnrank}

In this experiment, we evaluate the quality of our approach for CJN generation and ranking. We used the metrics MRR and $P@K$ for $K$ up to the tenth rank position. We tested several different setups but to save space we report here only those with representative distinct results. Specifically, we report the results of four setups without the eager evaluation, that is, $5/1/0$, $5/2/0$, $5/8/0$ and $5/9/0$
and two setups with the eager evaluation, that is $5/1/9$ and $5/2/9$.

\begin{figure*}[!htb]
	\centering
	\includegraphics[width=\linewidth]{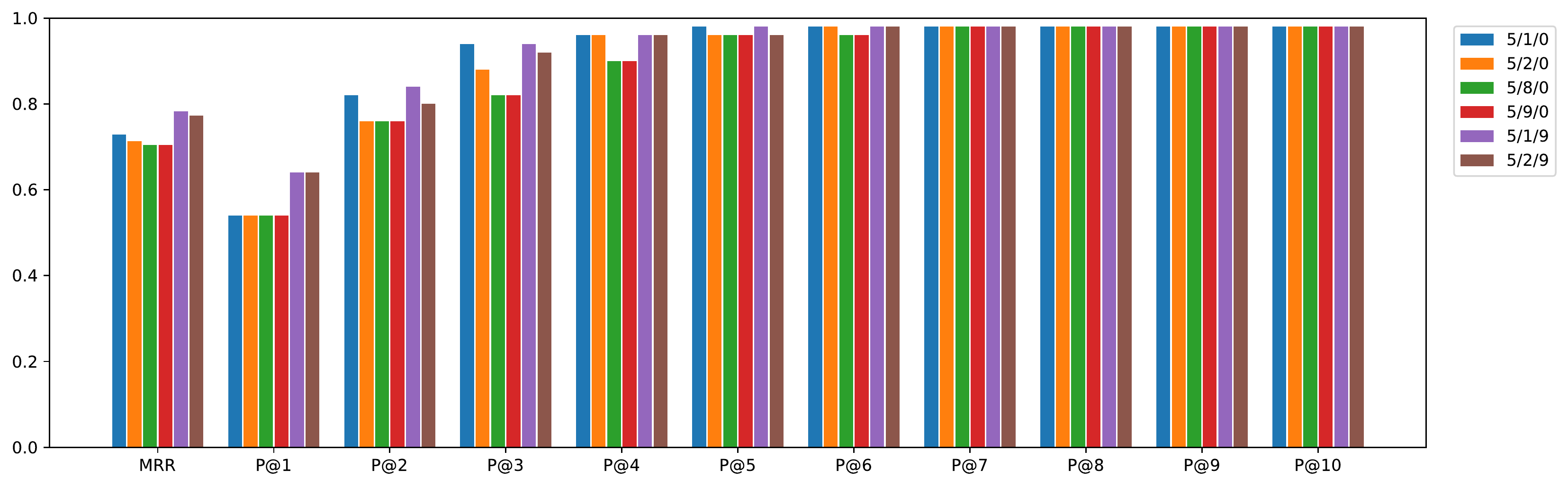}
	\includegraphics[width=\linewidth]{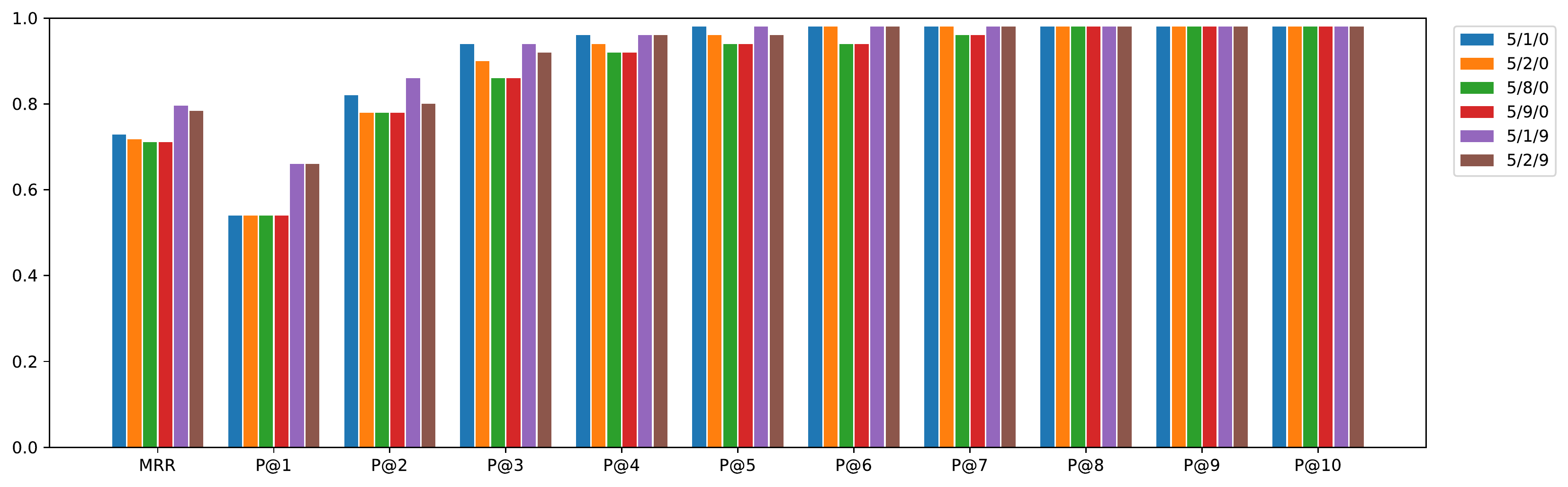}
	\caption{Ranking of Candidate Joining Networks - IMDb (top) and IMDb-DI (bottom)}
	\label{plot:ranking-cns-coffman-imdb}
\end{figure*}

Figure~\ref{plot:ranking-cns-coffman-imdb} shows the results for the IMDb and IMDb-DI query sets. As it can be seen, regardless of the configuration, our method was able to place the relevant CJNs in the top positions in the ranking, and the result is very similar for both IMDb and IMDb-DI query sets. This shows that in these datasets, our method was able to disambiguate the queries properly, even without the addition of schema references. 
It is worth noting that the values of P@1 in both datasets show that the configurations with the eager evaluation achieved better results because they place the relevant CJNs in the first ranking position more frequently. The $P@K$ metric also shows that
the quality of the ranking decreases as the number of CJNs per QM increases, especially for $K$ in the range $2\leq K \leq6$.

\begin{figure*}[!htb]
	\centering
	\includegraphics[width=\linewidth]{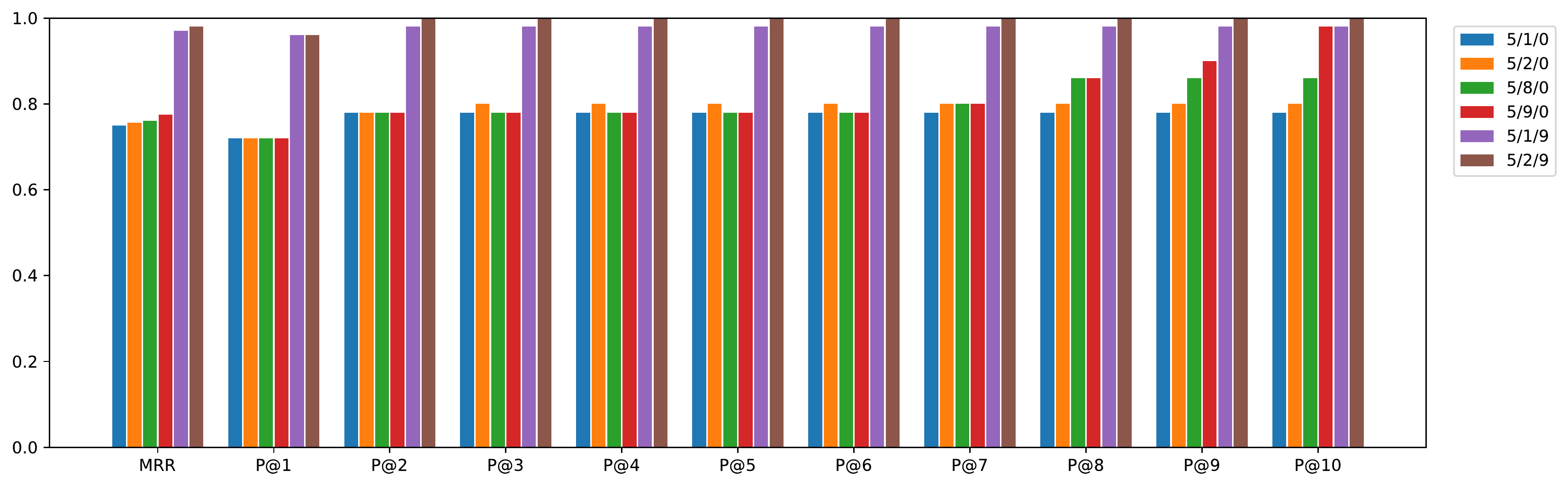}
	\includegraphics[width=\linewidth]{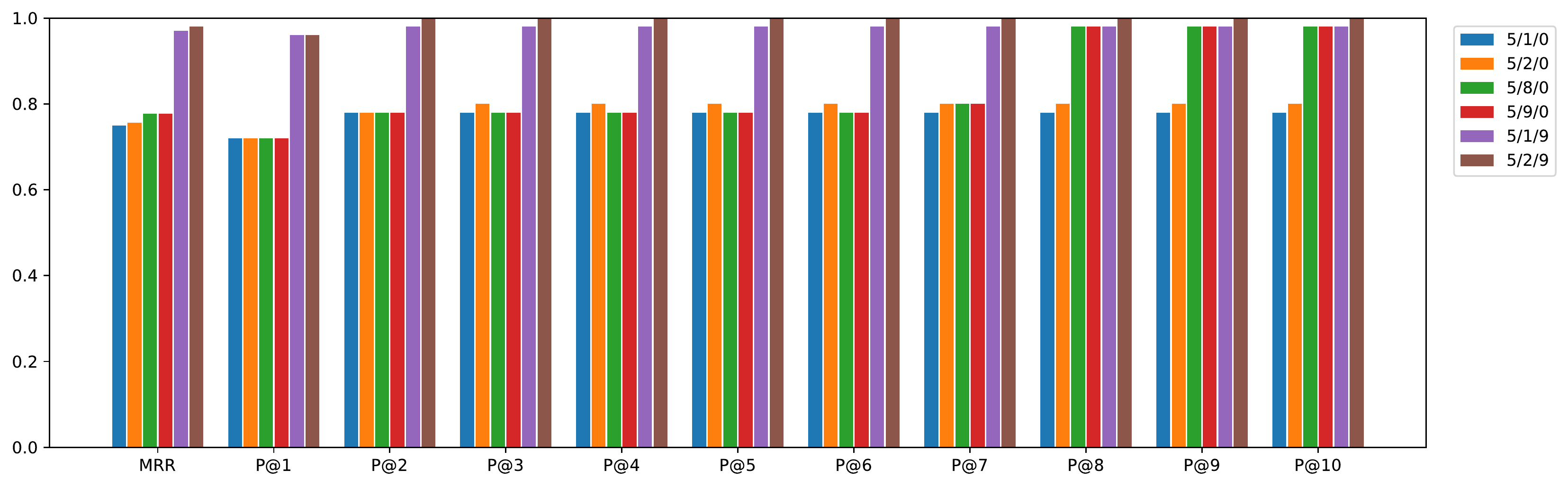}
	\caption{Ranking of Candidate Joining Networks - MONDIAL (top) and MONDIAL-DI (bottom)}
	\label{plot:ranking-cns-coffman-mondial}
\end{figure*}

Figure~\ref{plot:ranking-cns-coffman-mondial} shows the results for the MONDIAL and MONDIAL-DI query sets. In these datasets, the configurations with the eager evaluation achieved significantly better results. The configurations $5/1/0$ and $5/2/0$ could not generate the relevant CJN for around $20\%$ of the queries due to a low number of CJNs per QM, therefore, their results were capped at an MMR and $P@K$ value of $0.8$, approximately. The configurations $5/8/0$ and $5/9/0$ were able to generate the relevant CJN for most of the cases, although the large number of CJNs per QM negatively affected the ranking of CJNs. 
Finally, the configurations $5/1/9$ and $5/2/9$ produced the best results because the pruning enables us to generate the relevant CJN with a low number of CJNs per QM while also placing the relevant CJN in higher rank positions. Notice that the disambiguation of queries in the MONDIAL-DI query set allowed configurations $5/8/0$ and $5/9/0$ to have better results, especially for the $P@K$ metric for $K$ above 8. The eager evaluation configurations were able to disambiguate the queries without relying on the addition of schema references, therefore, their results were consistent across the MONDIAL and MONDIAL-DI query sets.

Regardless of the datasets and configurations, our method achieved an MRR value above 0.6, which indicates that on average, the relevant CJN is found between the first and the second rank positions. In the IMDb dataset, the decrease of $P@K$ values according to the number of CJNs taken per QM is also reflected on the MRR metric. However, in the MONDIAL dataset, the improvement of the $P@K$ values due to the disambiguation of queries is not reflected on the MRR value, as this improvement only happens in low ranking positions ($K \leq 8$).  

The eager CJN evaluation inherently affects the performance of the CJN generation process. Therefore it is important to look at the trade-off between the effectiveness and the efficiency in each configuration. We examine this trade-off in the next section.

\subsection{Performance Evaluation}

In this experiment, we aim at evaluating the time spent for obtaining the CJN given a keyword query, and analyze the trade-offs between efficiency and efficacy of the different configurations use in {\metodo}.

\begin{figure}[!htb]
    \centering
    \includegraphics[width=0.65\textwidth]{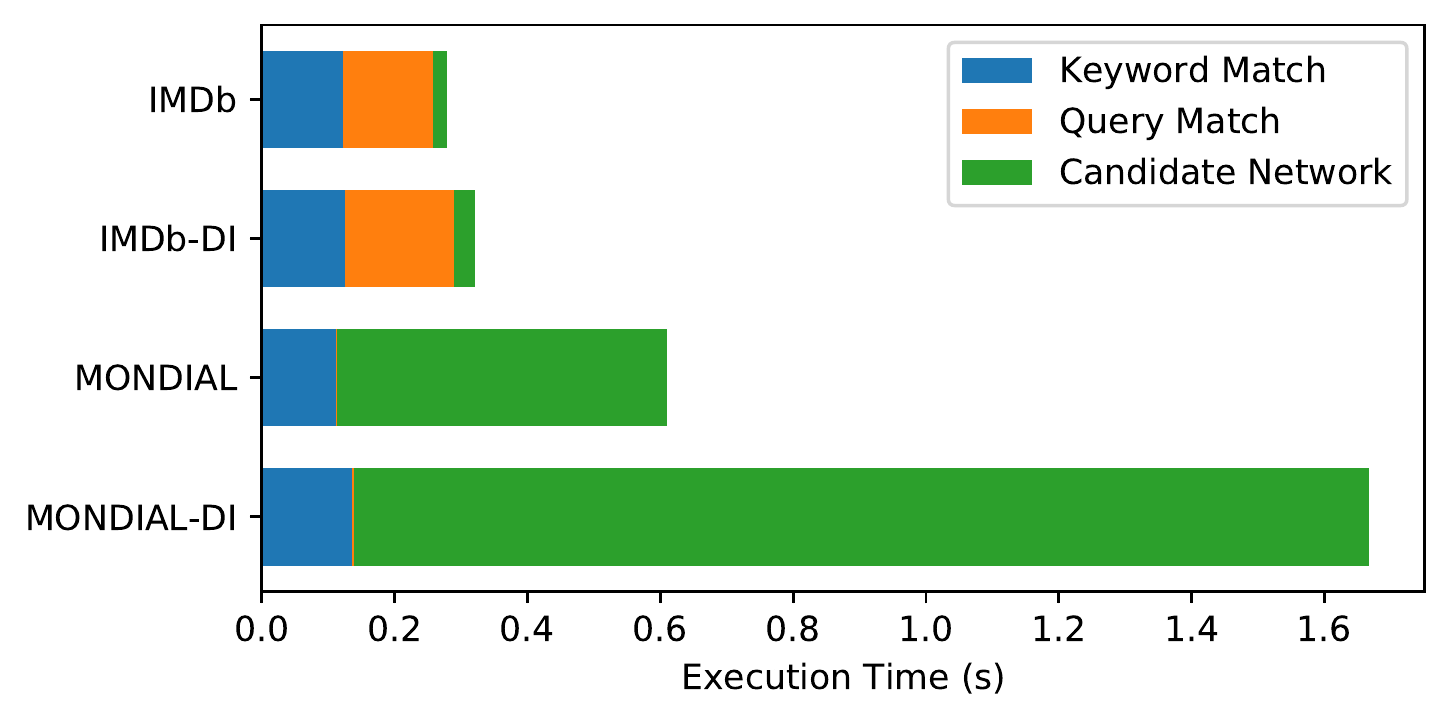}
    \caption{Average Execution Times for each phase of {\metodo}}
    \label{plot:performance-execution}
\end{figure}

Figure~\ref{plot:performance-execution} summarizes the average execution time for each phase of the process: Keyword Matching, Query Matching and the Candidate Joining Network Generation. In this first experiment, we used the configuration $5/1/0$. {\metodo} obtained better execution times for the IMDb dataset in all configurations. Also, the disambiguate variants of query sets 
yield slower execution times in comparison with the original counterparts. 

It is worth noting that the execution times for each query set are proportional to the number of KMs, QMs, and CJNs in the query sets shown in Table~\ref{tab:general-results}. 
Despite the lower number of KM generated in the case of MONDIAL and MONDIAL-DI, the execution times are similar for all query sets during the Keyword Matching phase.
One explanation for that is the high number of schema elements in the MONDIAL dataset, since there are 28 relations and 48 attributes to match when looking for SKMs (see Table~\ref{tab:datasets}). 
Due to the combination nature of QM generation, the execution times for the Query Matching phase are directly related to the number of QMs.
While the execution times for the IMDb and IMDb-DI query sets that produced a
high number of QMS are $0.13$ and $0.16$ seconds, respectively, the results for MONDIAL and MONDIAL-DI are around $1$ and $3$ milliseconds. 
Concerning the CJN phase, the execution times for MONDIAL are significantly higher in comparison with the execution times for IMDb, despite the lower number of CJNs for the former dataset. 
Because the CJN generation algorithm is based on a Breadth-First Search, the greater the number of vertices and edges in the schema graph of the MONDIAL dataset, the greater the number of iterations and, consequently, the slower the execution times. This behavior persists throughout different configurations, an issue we further analyze below.

\subsection{Quality versus Performance}

Figure~\ref{plot:performance-cn-generation} presents an evaluation of the CJN generation performance, comparing the same configurations used in the experiment of Section~\ref{sec:evaluation-cnrank}. We present the results for the MONDIAL and IMDb datasets in different scales because they differ by order of magnitude. Overall, execution times increase as the number of CJNs taken per QM increases. This pattern is more pronounced in the MONDIAL dataset. 
Because the system has to probe the CJNs running queries into the database, the eager CJN evaluation incurs an unavoidable increase of the CJN generation time.

\begin{figure*}[!htb]
    \centering
    \adjustbox{width=\textwidth}{
        \begin{tabular}{rl}
            \includegraphics[height=140px]{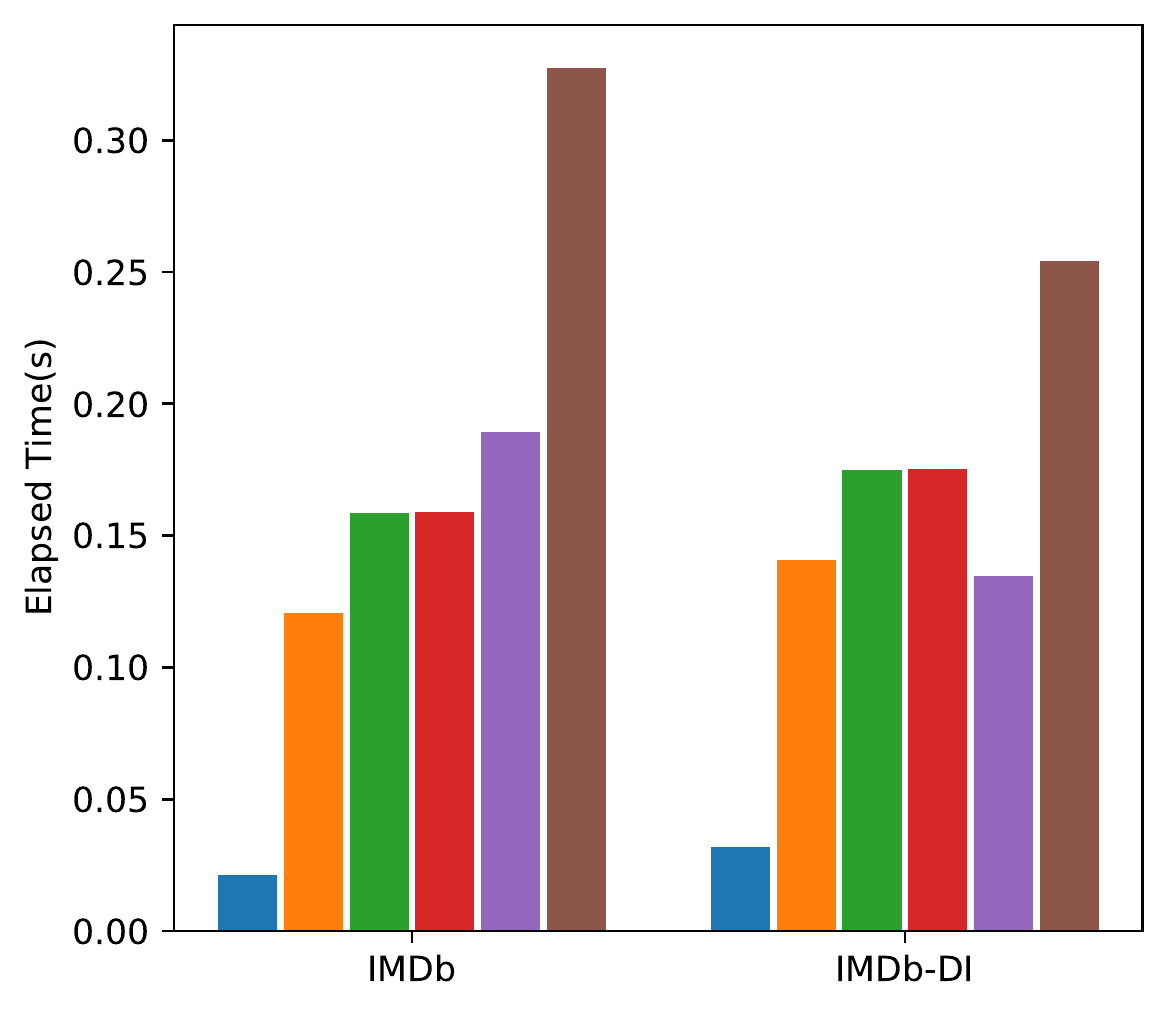}
            & 
            \includegraphics[height=140px]{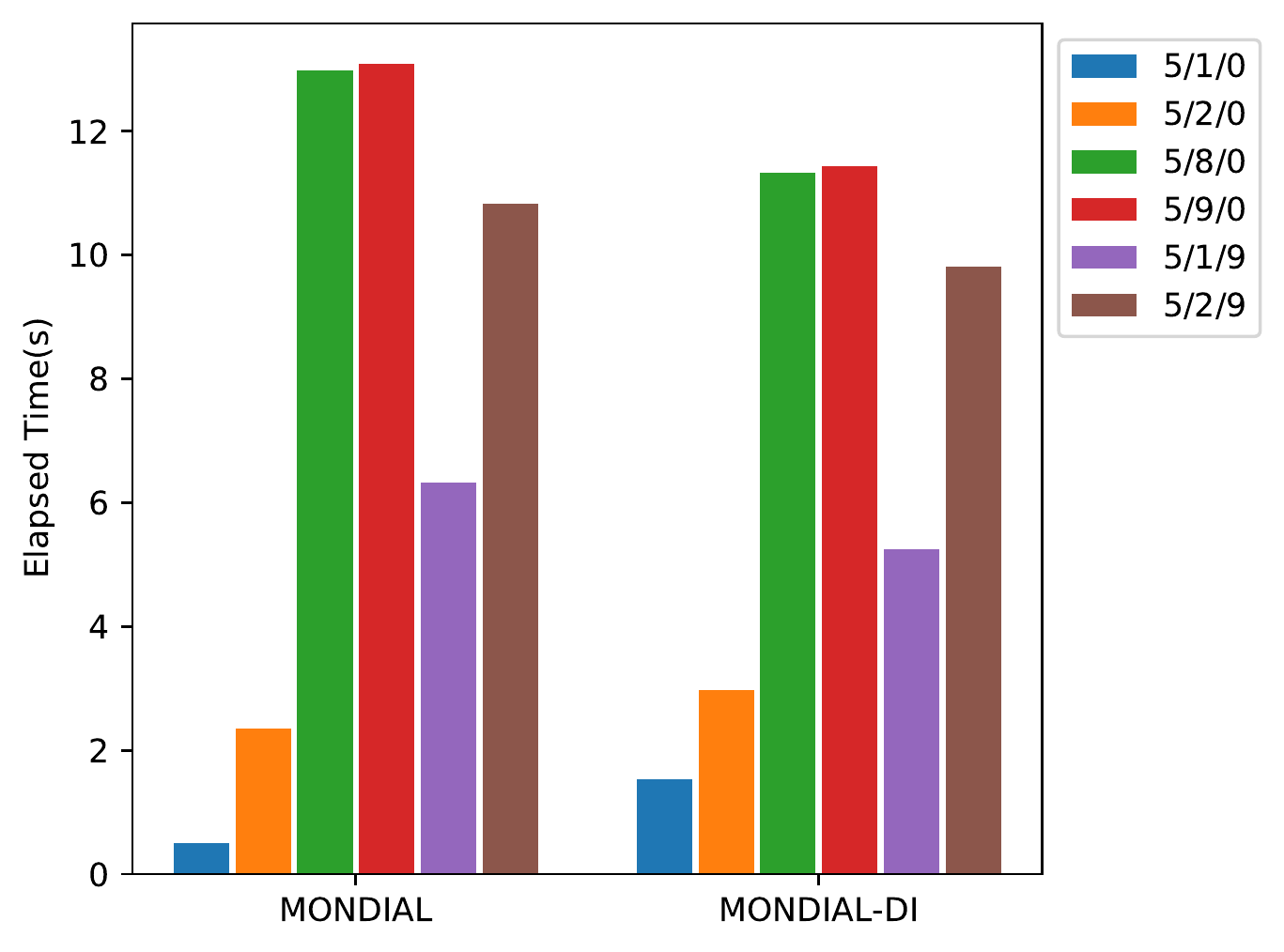}
        \end{tabular}
    }
    \caption{Performance Evaluation of the CJN Generating phase}
    \label{plot:performance-cn-generation}
\end{figure*}

As the configurations have an impact on both the quality of the CJN ranking and the performance, it is important to examine the trade-off between  effectiveness and efficiency. Configurations $5/1/0$ and $5/2/0$ achieved the best execution times due to the low number of CJNs per QM and not relying on database accesses. However, these configurations did not achieve the highest values of MRR and P@K. Therefore, they are recommended only if efficiency must be prioritized.
Configurations $5/8/0$ and $5/9/0$ achieved the worst execution times for the MONDIAL dataset and intermediate execution times for the IMDb dataset due to the high number of CJNs per QM. Although these configurations achieved a slight increase in the $P@K$ metric for the MONDIAL dataset, for $K \leq 8$, the values for both MRR and $P@K$ values for the IMDb dataset were the lowest among the configurations. 

Regarding the configurations $5/1/9$ and $5/2/9$, they achieved the worst execution times for the IMDb dataset and intermediate execution times for the MONDIAL dataset due to the eager CJN evaluation. However, the significantly better results of MRR and $P@K$ values for these configurations make them recommended options, if effectiveness must be prioritized. Notice that $5/1/9$ achieved better results than $5/8/0$ and $5/9/0$ for all query sets except the IMDb query set, in which the difference of execution times is less than 50ms. As for configuration $5/2/9$, the execution times were the worst for the IMDb dataset, but the results for the MONDIAL dataset were better than $5/8/0$ and $5/9/0$. While the MRR and $P@K$ values for configurations $5/1/9$ and $5/2/9$ were close, the former has better performance. Thus the configuration $5/1/9$ is recommended overall, especially if effectiveness must be prioritized.

It is interesting noting that although the configurations with eager CJN evaluation spend time to probe CJNs, sending queries to the DBMS. However, as they generate a smaller set of CJNs, the overall performance is not hindered in comparison with the configurations without it.

	\section{Conclusions}\label{chap:future-work}

In this paper we have proposed {\metodo}, a new relational keyword search (R-KwS) system for generating a suitable SQL query from a given keyword query. 
{\metodo} is the first  to address the problem of generating and ranking Candidate Joining Networks (CJNs) based on queries with keywords that can refer
to either instance values or database schema elements, such as relations and attributes. In addition, {\metodo} improves the quality of 
the CJN generated by introducing two major innovations: a ranking for selecting better Query Matches (QMs) in advance, yielding the generation
of fewer but better CJNs, and an eager evaluation strategy for pruning void useless CJNs. 
We present a comprehensive set of experiments performed with query 
sets and datasets previously used in experiments with previous state-of-the-art R-KwS systems and methods.
Our experiments indicate that {\metodo} can handle a 
wider variety of keyword queries while remaining highly effective, even for large databases with intricate schemas.

\newcommand{\footmongo}{\footnote{\url{https://www.mongodb.com/} }}

Our experience in the development of {\metodo} raised several ideas for future work.
First, one important issue in our method is being able to correctly match keywords from the input query to the corresponding database elements. To improve this issue, we plan to investigate new alternative similarity functions. 
We are particularly interested in using word-embedding-based functions, such as the well-known Word Mover’s Distance (WMD) \cite{Kusner@PMLR15WMD}.
Second, data exploration techniques have recently gained popularity because they allow for the extraction of knowledge from data even when the user is unsure of what to look for \cite{Idreos@SIGMOD15DataExploration}. 
Keyword-based queries, we believe, can be used as an interesting tool for data exploration because they allow one to retrieve interesting portions of a database without knowing the details of the schema and its semantics.  
Third, although we have focused on relational databases in this paper, the ideas we discussed here can be extended to other types of databases as well. Currently, we are extending these ideas to address the so-called \emph{document stores}, such as the very popular MongoDB{\footmongo} engine. 
Our preliminary findings suggest that because queries of this type are frequently more complex than queries of relational databases, allowing the simplicity of keyword queries may have even more advantages in this context. 
Finally, we anticipate that keyword queries will be useful as a tool for allowing the seamless integration of data from heterogeneous sources, as is the case in the so-called polystores systems and data lakes, which are becoming increasingly popular in recent years. There exist already research proposals in this direction \cite{Chanial@VLDB18Connectionlens}, we believe that the schema graph approach we adopt in our work can be helpful to achieve this goal.

	\section{Acknowledgments}\label{chap:acknowledgments}
	This research, according for in Article 48 of Decree nº 6.008/2006, was funded by Samsung Electronics of Amazonia Ltda, under the terms of Federal Law nº 8.387/1991, through agreement nº 003, signed with  ICOMP/UFAM.
	It was also supported by the Brazilian funding agency FAPEAM-POSGRAD 2020 (Resolution 002/2020), the Coordination for the Improvement of Higher Education Personnel-Brazil (CAPES) financial code 001, by projects SocSens (CAPES/ PCGI 88887.130299/2017-01), MMBIAS (FAPESP MCTIC/CGI, 2020/05173-4), and authors’ individual grants from CNPq.


\appendix

\section{VKMGen Algorithm} \label{apx:vkmgen}


As shown in Algorithm~\ref{alg:vkmgen},  {\metodo} retrieves tuples from the database in which the keywords occur and uses them to generate value-keyword matches.
Initially, the VKMGen Algorithm takes the occurrences of each keyword from the \textit{Value Index} and form partial value-keyword matches, which are not guaranteed to be disjoint sets yet (Lines \ref{line:vkmgen-begin-for}-\ref{line:vkmgen-conversion}). The pool of VKMs is represented by the \emph{Hash Table} $P$, whose keys are KMs and values are sets of tuple IDs. 

\begin{algorithm}[!htb]
	\caption{VKMGen($Q$)}
	\label{alg:vkmgen}
	\KwIn{A keyword query $Q{=}\{k_{1},k_{2},\ldots,k_{m}\}$}
	\KwOut{The set of value-keyword matches $V\!K$}
	
	\Let{$I_V$ the Value Index}
    \Let{$P\!$ be a Hash Table.}
	\For{keyword $k_i \in Q$}{ \label{line:vkmgen-begin-for}
		\If{$k_i \in I_V$}{
			\For{relation $R_j \in I_V[k_i]$}{
				\For{ attribute $A_k \in I_V[k_i][R_j]$}{
				    \Let{$K\!M$ be the partial keyword match $R_j^V[A_k^{\{ k_i \}}]$}
					$P[K\!M] \leftarrow I_V[k_i][R_j][A_k]$ \label{line:vkmgen-conversion}
				}
			}	
		}
	}
	$P \leftarrow \textbf{VKMIter}(P$) \; \label{VKMGen-line-recursive-call}
	\For{value-keyword match $K\!M_{u} \in P$}{ \label{VKMGen-line-begin-ignore-tuple}
	    $V\!K \leftarrow V\!K \cup \{K\!M_{u}\}$ \;
	} \label{VKMGen-line-end-ignore-tuple}
	\Return $V\!K$	
\end{algorithm}

Next, {\metodo} ensures that VKMs are disjoint sets through the Algorithm~\ref{alg:VKMInter}, VKMInter, which is based on the ECLAT algorithm~\cite{Zaki@IEEE00Eclat} for finding frequent itemsets. VKMInter looks for non-empty intersections of the partial value-keyword matches recursively until all of them are disjoint sets, and thus, proper VKMs. These intersections are calculated as follows:
\[
K\!M_1 \cap K\!M_2 =
	\begin{dcases}
	\emptyset &,\text{if }R_a\neq R_b\\
	R^{V}_{ab}[A_{ab,1}^{K_{ab,1}},\ldots,A_{ab,m}^{K_{ab,m}}]&, \text{if }R_a=R_b\\
	\end{dcases}	
\]
where $K\!M_x = R^{V}_x[A_{x,1}^{K_{x,1}},\ldots,A_{x,m}^{K_{x,m}}]$ for $x \in \{a,b\}$, and $K_{ab,i} = K_{a,i}\cup K_{b,i}$.

\begin{algorithm}[!htb]

	\caption{VKMInter($P$)}
	\label{alg:VKMInter}

	\KwIn{A Hash Table $P$ whose keys are partial value-keyword matches and values are tuples.}
	\KwOut{A Hash Table $P$ whose keys are proper value-keyword matches and values are tuples.}

    \Let{$P_{next}$ be a Hash Table.} \label{VKMInter:line-begin-definition}
    \Let{$R$ be a Hash Table.}
    \For{value-keyword match $K\!M_{u} \in P $}{
	    $R[K\!M_{u}] \leftarrow \emptyset $ \;
	} \label{VKMInter:line-end-definition}
	\For{pair of keyword matches $\{K\!M_a,K\!M_b\} \in \binom{P}{2}$}{ \label{VKMInter:line-begin-intersection}
	    $K\!M_{ab} \leftarrow K\!M_{a} \cap K\!M_{b}$ \;
		$T_{ab} \leftarrow P[K\!M_a] \cap P[K\!M_b]$ \;
		\If{$T_{ab} \neq \emptyset \ \KwAnd \ K\!M_{ab}$ is valid}{ \label{VKMInter:line-if-intersection}
			$P_{next}[K\!M_{ab}] \leftarrow T_{ab}$\;
			$R[K\!M_{a}] \leftarrow R[K\!M_{a}] \cup T_{ab}$\;
			$R[K\!M_{b}] \leftarrow R[K\!M_{b}] \cup T_{ab}$\; \label{VKMInter:line-add-for-removal}
		}
	} \label{VKMInter:line-end-intersection}
	\For{value-keyword match $K\!M_{u} \in R$}{ \label{VKMInter:line-begin-removal}
	    $P[K\!M_{u}] \leftarrow P[K\!M_{u}] - R[K\!M_{u}]$ \;
	    \If{$P[K\!M_{u}] = \emptyset $}{
	        \textbf{remove } $K\!M_{u}$ from $P$\;
	        $P.remove(K\!M_{u})$ \;
	    } 
	} \label{VKMInter:line-end-removal}
	$P_{next} \leftarrow \textbf{VKMInter}(P_{next})$\; \label{VKMInter:line-next-iteration}
    \textbf{update } $P$ \textbf{with} $P_{next}$\;    \label{VKMInter:line-update-pool}
	\Return{$P$} \label{VKMInter:line-return}
\end{algorithm}

VKMInter uses three hash tables: $P$, $P_{next}$ and $R$. The pool $P$ contains the partial VKMs of the current iteration. The pool $P_{next}$ contains the partial VKMs for the next iteration. The pool $R$ stores the tuple IDs to be removed from the VKMs of $P$ at the end of the current iteration, turning the partial VKMs into proper value-keyword matches. 

VKMInter first defines the hash tables $P_{next}$ and $R$, then initializes $R$ with empty sets (Lines \ref{VKMInter:line-begin-definition}-\ref{VKMInter:line-end-definition}). Next, the algorithm iterates over all pairs $\{K\!M_{a},K\!M_{b}\}$ of VKMs in $P$ and tries to create a new keyword match $K\!M_{ab}$, which is the intersection of $K\!M_{a}$ e $K\!M_{b}$ (Lines \ref{VKMInter:line-begin-intersection}-\ref{VKMInter:line-end-intersection}).  If $K\!M_{ab}$ is valid, that is, if $K\!M_{a}$ e $K\!M_{b}$ are VKMs over the same database relation, and the tuples $T_{ab}$ within $K\!M_{ab}$ are not empty, then we add $K\!M_{ab}$ to the next iteration pool $P_{next}$ and add the tuples $T_{ab}$ to $R$ for removal after the iteration (Lines \ref{VKMInter:line-if-intersection}-\ref{VKMInter:line-add-for-removal}). After all the possible intersections are processed, VKMInter iterates over $R$ and removes the tuples for each VKM of the pool $P$, making them proper disjoint keyword matches (Lines \ref{VKMInter:line-begin-removal}-\ref{VKMInter:line-end-removal}). Lastly, VKMInter recursively process the pool $P_{next}$ for the next iteration, then it updates and returns the current pool $P$ (Lines \ref{VKMInter:line-update-pool}-\ref{VKMInter:line-return}).

After the execution of VKMInter, in Line \ref{VKMGen-line-recursive-call} of VKMGen, we obtained the value-keyword matches and their tuples. As the sets of tuples are only required for the generation of VKMs, VKMGen generates and outputs the set of value-keyword matches, ignoring the tuples from $P$ (Lines \ref{VKMGen-line-begin-ignore-tuple}-\ref{VKMGen-line-end-ignore-tuple}). From now on, {\metodo} does not need to manipulate the database tuples or their IDs.

\section{SKMGen Algorithm} \label{apx:skmgen}


The generation of schema-keyword matches uses a structure we call the \textit{Schema Index}, which is created in a preprocessing phase, alongside with the Value Index. This index stores information about the database schema and statistics about attributes, which are used for the ranking of QMs, which will be explained in Chapter~\ref{chap:query-matching}. The stored information follows the structure below: \[I_S=\{relation:\{attribute:\{(norm,maxfrequency)\}\}\}\]

The generation of SKMs is carried out by Algorithm~\ref{alg:skmgen}, SKMGen. First, the algorithm iterates over the relations and attributes from the Schema Index. Then, SKMGen calculates the similarity between each keyword and schema element. It only considers the pairs with similarity above a defined threshold $\varepsilon$ (Line \ref{line:skmgen-threshold}), which are used to generate SKMs (Line \ref{line:skmgen-end-iteration}). 

\begin{algorithm}[!htb]
	\caption{SKMGen($Q$)}
	\label{alg:skmgen}
	\KwIn{A keyword query $Q{=}\{k_{1},k_{2},\ldots,k_{m}\}$, the Schema Index $I_S$}
	\KwOut{The set of schema-keyword matches $S\!K$}
	$S\!K \leftarrow \{\}$\;
	\For{keyword $k_i \in Q$}{  \label{line:skmgen-start-iteration}
		\For{relation $R_j \in I_S$}{  \label{line:skmgen-end-iteration}			
			
			\If{$sim(k_i,R_j) \geq \varepsilon$}{ \label{line:skmgen-threshold1}
			    \Let{$K\!M$ be the schema-keyword match $R_j^S[self^{\{ k_i \}}]$}
				$S\!K\! \leftarrow S\!K\! \cup \{K\!M\}$)	
			}	
			
			\For{ attribute $A_l \in I_S[R_j]$}{
				\If{$sim(k_i,A_l) \geq \varepsilon$}{ \label{line:skmgen-threshold}
				    \Let{$K\!M$ be the schema-keyword match $R_j^S[A_l^{\{ k_i \}}]$}
					$S\!K\! \leftarrow S\!K\! \cup \{K\!M\}$
				}	
			}
			
		}	
	}
	\Return{$S\!K$}
\end{algorithm}


\section{QMGen Algorithm} \label{apx:qmgen}


The generation of query matches is carried out by Algorithm~\ref{alg:qmgen}, QMGen, which preserves the ideas
proposed in MatCNGen~\cite{Oliveira@ICDE18MatCNGen}, adapt them to keyword matches instead of tuple-sets.
Let $V\!K$ and $S\!K$ be respectively sets of value-keyword matches, and schema-keyword matches previously generated.
The algorithm looks for combinations of keyword matches in $P{=}V\!K{\cup}S\!K$ that form minimal covers for the query $Q$. 
At a first glance, this statement may suggest that we need to generate the whole power set of $P$ to obtain the complete set of QMs. However, it can be shown that any minimal cover of a set of $n$ elements has at most $n$ subsets~\cite{Hearne@73minimalCover}. 
Therefore, no match for a query $Q$ can be formed by more than $|Q|$ keyword matches. 
Also, as the QM ranking presented in Section~\ref{sec:query-match-ranking} penalizes QMs with a large number of KMs, we can define a maximum QM size $t{\leq}|Q|$ to prune QMs which are less likely to be relevant.
For this reason, QMGen iterates over all the subsets of $P$  whose size is less than or equal to a maximum QM size $t$, which is at most the size of the query $Q$ (Lines \ref{line:qmgen-iteration-max-size}-\ref{line:qmgen-iteration-combination}). Next, QMGen checks whether the combination $M$ of keyword matches form a minimal cover for the query. The evaluation of minimal cover is carried out by Algorithm~\ref{alg:minimalCover}.

\begin{algorithm}[!htb]
	\caption{QMGen($Q,V\!K,S\!K$)}
	\label{alg:qmgen}
	\KwIn{A keyword query $Q{=}\{k_{1},k_{2},\ldots,k_{m}\}$ \newline
		The set of value-keyword matches $V\!K$  \newline
		The set of schema-keyword matches $S\!K$ \newline
		The maximum QM size $t$\;
		
	}
	\KwOut{The set of query matches $Q\!M$}
	$P = V\!K \cup S\!K$\;
	$Q\!M \leftarrow \emptyset$\;
	\For{$i \in \{1,\ldots,\textbf{min}(|Q|,t)\}$}{ \label{line:qmgen-iteration-max-size}
	    \For{combination of keyword matches $M \in \binom{P}{i}$}{ \label{line:qmgen-iteration-combination}
			\If{\textbf{MinimalCover}($M,Q$)}{
				$M \leftarrow $\textbf{MergeKeywordMatches($M$)} \;
				$Q\!M\! \leftarrow Q\!M\! \cup \{M\}$ \;
			}
		}
	}
	\Return{$Q\!M$}	
\end{algorithm}

The algorithm MinimalCover iterates through the KMs from the combination $M$, generating a set $C_{M}$ which comprise all keywords covered by $M$ (Lines \ref{line:minimalcover-begin-cover-qm}-\ref{line:minimalcover-end-cover-qm}). Next, the algorithm checks whether $M$ is total, that is, whether $C_{M}{=}Q$. Notice that since KMs can only associate an attribute or relation in the database schema to keyword from the query $Q$, that is $C_{M}{\subseteq}Q$, then we can imply that $C_{M}{=}Q$ if, and only if, $|C_{M}|{=}|Q|$ (Line \ref{line:minimalcover-check-total}). Next, MinimalCover checks whether $M$ is minimal, that is, if we remove any keyword match from $M$ it will no longer be total. For this reason, MinimalCover iterates again through the KMs and, for each one, it generates a set $C_{KM}$ which comprise all keywords covered by $KM$. Then, the algorithm check whether the set difference of $C_{M}{\setminus}C_{KM}$ is still equal to $Q$, which can be achieved by comparing $|C_{M}{\setminus}C_{KM}|{=}|Q|$.

\begin{algorithm}[!htb]
	\caption{MinimalCover($Q,M$)}
	\label{alg:minimalCover}
	\KwIn{A keyword query $Q{=}\{k_{1},k_{2},\ldots,k_{m}\}$\newline
		The set of keyword matches $M$
	}
	\KwOut{If the set of keywords from $M$ forms a minimal cover over $Q$}
	$C_M \leftarrow \emptyset$ \; \label{line:minimalcover-begin-cover-qm}
	\For{keyword match $K\!M \in M$}{
	    \Let{$K\!M$ be $R^{X}[A_{1}^{K_{1}},\ldots,A_{m}^{K_{m}}]$,
	    where $X \in \{S,V\}$}
		\For{$i \in \{1,\ldots,m\}$}{
			$C_M \leftarrow C_M \cup K_{i}$\;
		}	
	} \label{line:minimalcover-end-cover-qm}
	\If{$ |C_M| \neq |Q|$}{ \label{line:minimalcover-check-total}
		\Return{$False$}
	}
	\For{keyword match $K\!M \in M$}{  \label{line:minimalcover-begin-cover-km}
	    \Let{$K\!M$ be $R^{X}[A_{1}^{K_{1}},\ldots,A_{m}^{K_{m}}]$,
	    where $X \in \{S,V\}$}
	    $C_{K\!M} = \emptyset$ \;
		\For{$i \in \{1,\ldots,m\}$}{
			$ C_{K\!M} \leftarrow C_{K\!M} \cup K_{i}$\;
		}	  \label{line:minimalcover-end-cover-km}
		\If{$|C_M \setminus C_{K\!M}| = Q$}{ \label{line:minimalcover-check-minimal}
			\Return{$False$}
		}
	}
	\Return{$True$}
\end{algorithm}

If $M$ forms a minimal cover for $Q$, then $M$ is considered a query match.
However, $M$ may have some keyword matches which can be merged, especially SKMs. 
The merging of KMs from $M$ is carried out by Algorithm~\ref{alg:merge-keyword-matches}. 
Notice that we cannot merge two VKMs since they are disjoint sets, however we can merge a schema-keyword match with both a SKM or a VKM. 
The algorithm MergeKeywordMatches uses the two \emph{hash tables} $P_{V\!K}$ e $P_{S\!K}$ to store, respectively, the VKMs and SKMs based on the relation they are built upon (Lines \ref{line:mergekms-begin-hash-generation}-\ref{line:mergekms-end-hash-generation}). Next, the algorithm iterates through the relations present in $P_{SK}$ and tries to merge all possible KMs from that relation, resulting in a keyword match $KM_{merged}$. $KM_{merged}$ starts as a keyword-free match but it is merged with all existent SKMs (Lines \ref{line:mergekms-begin-skm-merge}-\ref{line:mergekms-end-skm-merge}), then it is merged an arbitrary value-keyword match $V\!K\!M$, if existent (Lines \ref{line:mergekms-begin-vkm-merge}-\ref{line:mergekms-end-vkm-merge}). Lastly, $KM_{merged}$ and all values-keyword matches except $V\!K\!M$ are added to the query match $M'$, which is returned at the end of MergeKeywordMatches (Lines \ref{line:mergekms-add-kmmerged}-\ref{line:mergekms-return}).

After merging all the possible elements from the query match $M$, QMGen adds $M$ to the set of query matches $Q\!M$, which is returned at the end of the algorithm.

 \begin{algorithm}[!htb]
	\caption{MergeKeywordMatches($Q,M$)}
	\label{alg:merge-keyword-matches}
	\KwIn{	The set of keyword matches $M$	}
	\KwOut{The set of keyword matches $M'$}
	\Let{$P_{V\!K}$ be a Hash Table.} \label{line:mergekms-begin-hash-generation}
	\Let{$P_{S\!K}$ be a Hash Table.}
	\For{ $K\!M \in M$}{
	    \Let{$K\!M$ be $R^{X}[A_{1}^{K_{1}},\ldots,A_{m}^{K_{m}}]$, where $X \in \{S,V\}$}
	    $P_{V\!K}[R] \leftarrow \emptyset$\;
	    $P_{S\!K}[R] \leftarrow \emptyset$\;
	}
	\For{ $K\!M \in M$}{
		\Let{$K\!M$ be $R^{X}[A_{1}^{K_{1}},\ldots,A_{m}^{K_{m}}]$,
	    where $X \in \{S,V\}$}
	     \uIf{$X = S$}{
            $P_{S\!K}[R] \leftarrow P_{S\!K}[R] \cup \{K\!M\}$\;
        }
        \Else{
            $P_{V\!K}[R] \leftarrow P_{V\!K}[R] \cup \{K\!M\}$\;
        }
	} \label{line:mergekms-end-hash-generation}
	$M' \leftarrow \emptyset$\;
	\For{$R \in P_{S\!K}$}{ \label{line:mergekms-begin-skm-merge}
	    \Let{$K\!M_{merged}$ be a keyword-free match from $R$}
	    \For{$S\!K\!M \in P_{S\!K}[R]$}{
	        $K\!M_{merged} \leftarrow K\!M_{merged} \cap S\!K\!M$\;
	    } \label{line:mergekms-end-skm-merge}
	    \If{$P_{V\!K}[R] \neq \emptyset$}{ \label{line:mergekms-begin-vkm-merge}
	        \Let{$V\!K\!M$ be an element from $P_{V\!K}[R]$}
	        $K\!M_{merged} \leftarrow K\!M_{merged} \cap V\!K\!M$\;
	        $P_{V\!K}[R] \leftarrow P_{V\!K}[R] - \{V\!K\!M\}$\;
	    } \label{line:mergekms-end-vkm-merge}
	    $M' \leftarrow M' \cup \{K\!M_{merged}\} \cup P_{V\!K}[R]$\; \label{line:mergekms-add-kmmerged}
	}
	\Return{$M'$} \label{line:mergekms-return}
\end{algorithm}

\section{QMRank Algorithm} \label{apx:qmrank}


    The ranking of Query Matches is carried out by Algorithm~\ref{alg:qmrank}, QMRank. Notice, that, intuitively, 
    the process of ranking QMS advances part of the relevance assessment of the CJNs, which was first proposed in CNRank~\cite{Oliveira@ICDE15CNRank}. This yields to an effective ranking of QMs and a simpler ranking of CJNs. QMRank uses a value score and a schema score, which are respectively related to the VKMs and SKMs that compose the QM.

\begin{algorithm}[!htb]
	\caption{QMRank($Q\!M$)}
	\label{alg:qmrank}
	\KwIn{A set of query matches $Q\!M$
	}
	\KwOut{The set of ranked query matches $R\!Q\!M$}
	$R\!Q\!M \leftarrow [\ ]$\;
	\For{$M \in Q\!M$}{
		$value\_score \leftarrow 1,$
		$schema\_score \leftarrow 1$
		
		\For{ $K\!M \in M$}{
    		\Let{
    		    $K\!M$ be 
    		    $R^{S}[A_{1}^{K^S_{1}},\ldots,A_{m}^{K^S_{m}}]^{V}[A_{1}^{K^V_{1}},\ldots,A_{m}^{K^V_{m}}]$
    		}
    	    \For{$i \in \{1,\ldots,m\}$}{
    	        \If{$|K_{i}^V| \geq 1$}{
    	            $weight\_sum \leftarrow 0$\;
    	            $norm_{A_i} \leftarrow I_S[R][A_i]$\;
    	            \For{$word \in K_{i}^V$}{
    	                $tf \leftarrow |I_V[word][R][A_i]|$\;
    	                $weight\_sum \leftarrow weight\_sum + tf\times \textbf{iaf}(word)$\;
    	            }
					$value\_score \leftarrow value\_score \times weight\_sum/norm_{A_i}$
				}
				\If{$|K_{i}^S| \geq 1$}{
    	            $weight\_sum \leftarrow 0$\;
    	            \For{$word \in K_{i}^S$}{
    	                \uIf{$A_j = self$}{
							$schema\_element \leftarrow R$
						}
						\Else{
							$schema\_element \leftarrow A_i)$
						}
						$weight\_sum \leftarrow weight\_sum + \textbf{sim}(schema\_element,word)$\;
    	            }
					$schema\_score \leftarrow schema\_score \times weight\_sum/|K_{i}^S|$
				}
    	    }
    	}
		$final\_score \leftarrow value\_score \times schema\_score$\;
		$R\!Q\!M$.append($\langle final\_score, M \rangle$)
	}
	
	\textbf{Sort }$R\!Q\!M$ in descending order\; \label{line:qm-sorting}
	
	\Return{$R\!Q\!M$}	
\end{algorithm}

    The algorithm first iterates over each query match, assigning 1 to both $value\_score$ and $schema\_score$. Next, QMRank goes through each keyword match from the QM. In the case of a KM matching the values of an attribute, the algorithm updates the $value\_score$ based on the cosine similarity using TF-IDF weights. QMRank retrieves the term frequency and inverted attribute frequency from the Value Index , and the norm of an attribute from the Schema Index, which are all calculated in the preprocessing phase (see Section~\ref{sec:architecture}). 
    In the case of a KM matching the name of a schema element, the algorithm updates the $schema\_score$ the average similarity of the keywords with the schema elements based on the similarity functions presented in Section~\ref{chap:keyword-matching}.
    Once the algorithm aggregates the scores of KMs to generate the score of QMs, the final step is to sort them in descending order.

\section{Sound Theorem} \label{apx:sound-thm}


\soundthm*

\begin{proof}
    \label{proof:sound-thm}
    Let $R_a$ and $R_b$ be database relations so that there exists $n$ \emph{ Referential Integrity Constraint} (RICs) from $R_a$ to $R_b$. Intuitively, a tuple from $R_a$ may refer to at most $n$ tuples from $R_b$.
    Consider a joining network of keyword matches $J$ wherein a keyword match over $R_a$ is adjacent to $m$ keyword matches over $R_b$, that is $J= \langle \mathcal{V}, E \rangle$, where $\mathcal{V}=\{K\!M_1,\ldots,K\!M_{m+1}\}$, $E = \{ \langle KM_1,KM_i \rangle | 2 \leq i \leq m\}$,
    and $R_1 = Ra \wedge R_i=R_b, 2 \leq i \leq m$.
    We can translate $J$ into a relational algebra expression wherein the edges are join operations using RICs and keyword matches are selection operations over relations.
    For didactic purposes, we assume, without loss of generality, that all the KMs of $J$ are keyword-free matches.
    Let $k_j$ be a key attribute from $R_i$ and $f_{i,j}$ be the attribute from $R_i$ that references $k_j$.
    The SQL translation of $J$ can be represented by $T_{m+1}$, which expands a join operation in each iteration.
    \begin{alignat*}{4}
	    &T_1 &=\ & R_1 &&\\
        &T_2 &=\ & T_1 \bowtie_{f_{1,2}=k_2} R_2 &&\\
        &T_3 &=\ & T_2 \bowtie_{f_{1,3}=k_3} R_3 &&\\
        &T_{n+1} &=\ & T_{n} \bowtie_{f_{1,{n+1}}=k_{n+1}} R_{n+1} &&\\
        &T_{n+2} &=\ & T_{n+1} \bowtie_{f_{1,{x}}=k_{n+2}} R_{n+2}, &&\text{where } x \in \{2,\ldots,n+1\} &&\\
	\end{alignat*}
	Notice that by the iteration $n+2$, all RICs from $R_a$ to $R_b$ were already used once. Therefore, this expansion require that we use one of the RICs twice, which would lead to redundancy.
	For instance, if assume $x=2$, without loss of generality, then:
	\begin{alignat*}{4}
        &T_2 &=\ & T_1 \bowtie_{f_{1,2}=k_2} R_2 &&\\
        &T_{n+2} &=\ & T_{n+1} \bowtie_{f_{1,{2}}=k_{n+2}} R_{n+2} &&\\
	\end{alignat*}
	As the join conditions are stacked in each iteration, we can say that:
	\[
	    f_{1,2}=k_2 \wedge f_{1,{2}}=k_{n+2} 
	\]
	which implies that $k_2 = k_{n+2}$ and, thus, all the returning JNTs would have more than one occurrence of the same tuple for every instance of the database.
	
	 \begin{alignat*}{2}
        &T_{m+1} &=\ & T_m \bowtie_{f_{1,{x}}=k_{m+1}} R_{m+1}\\
	\end{alignat*}
\end{proof}

\section{CJNGen Algorithm} \label{apx:cnkmgen}


The generation and ranking of CJNs is carried out by Algorithm~\ref{alg:cnkmgen}, CJNGen, which uses a \textit{Breadth-First Search} approach \cite{Cormen@09Algorithms} to expand JNKMs until they comprehend all elements from a query match.

Despite being based on the MatCNGen Algorithm~\cite{Oliveira@ICDE18MatCNGen}, CJNGen provides support for generating CJNs wherein there exists more than one RIC between one database relation to another, due to the definition of soundness presented in Theorem~\ref{thm:sound-cjn}. Also, CJNGen does not require an intermediate structure such as the \emph{Match Graph} in the MatCNGen system.

We describe CJNGen in Algorithm~\ref{alg:cnkmgen}. For each query match, CJNGen generates the candidate joining networks for this query match using an internal algorithm called CJNInter, which we will focus on describing in the remainder of this section.

\begin{algorithm}[htb]
	\caption{CJNGen($R\!Q\!M,G_S$)}
	\label{alg:cnkmgen}
	\KwIn{The set of ranked query matches $R\!Q\!M$  \newline
		The schema graph $G_S$
	}
	\KwOut{The set of candidate networks $C\!J\!N$}
	
	$C\!J\!N = \{\}$\;
	\For{query match $M \in R\!Q\!M$}{
		$C\!J\!N_M \leftarrow$ \textbf{CJNInter}($M,G_S$)\;	
		$C\!J\!N \leftarrow C\!J\!N \cup C\!J\!N_M$
	}
	\Return{$C\!J\!N$}
\end{algorithm}

\begin{algorithm}[!htb]
\caption{CJNInter($G_S,M,score_{M}$)}

\label{alg:cnkmperqmgen}

\KwIn{The query match $M$; The schema graph $G_S$
}

\KwOut{A set $C\!J\!N$ of candidate networks for the query match $M$}

$C\!J\!N \leftarrow [\ ]$\;
$J \leftarrow$ Graph() \;
\Let{$K\!M$ be an element from $M$} \label{line:cnkm-starting-node-start}

Add $K\!M$ to $J$.$\mathcal{V}$ \; \label{line:cnkm-starting-node-end}

\If{$|M|=1$}{ 
	\Return{$\{J\}$} \label{line:cnkm-unitary-cn}
}
$D \leftarrow$ queue() \; \label{line:cnkm-inicialization-start}
$D\!$.enqueue($J$) \; \label{line:cnkm-inicialization-end}
\While{$D \neq \{\}$}{ \label{line:cnkm-begin-expansion} 
	$J \leftarrow D\!$.dequeue()\;
	\For{$K\!M_u \in J.\mathcal{V}$}{
		\Let{
		    $K\!M_u$ be 
		    $R_u^{S}[A_{u,1}^{K^S_{u,1}},\ldots,A_{u,m}^{K^S_{u,m}}]^{V}[A_{u,1}^{K^V_{u,1}},\ldots,A_{u,m_u}^{K^V_{u,m_u}}]$
		}
		\Let{$G_S^U$ be the undirected version of $G_S$}
		\For{ $R_a$ adjacent to $R_u$ in $G_S^U$}{ \label{line:cnkm-undirected-schema-graph} 
			\For{$K\!M_v \in M{\setminus}C\!N\!.\mathcal{V}$}{
			    \Let{
        		    $K\!M_v$ be 
        		    $R_v^{S}[A_{v,1}^{K^S_{v,1}},\ldots,A_{v,m}^{K^S_{v,m}}]^{V}[A_{v,1}^{K^V_{v,1}},\ldots,A_{v,m_v}^{K^V_{v,m_v}}]$
        		}
			    \If{$R_v = R_a$}{
				    $J' \leftarrow J$\;				\label{line:cnkm-new-tree-start} 
    				Expand $J'$ with $K\!M_v$ joined to $K\!M_u$\; \label{line:cnkm-new-tree-end} 
    				
    				\If{$J' \notin C\!N$ \KwAnd $J'$ is \emph{sound}}{ \label{line:cnkm-check-sound} 
    					\If{$J'.\mathcal{V} \supseteq M$}{   \label{line:cnkm-check-become-cn-start} 
    						$C\!N$.append($J'$) \;  
    					}
    					\Else{
    						$D$.enqueue($J'$) \; \label{line:cnkm-check-become-cn-end}
    					}
    				}
				}
			}
			$J' \leftarrow J$\;		
			Expand $J'$ with $R_a^S[\ ]^V[\ ]$ joined to $K\!M_u$\;
			$D$.enqueue($J'$) \;
		}
	}
} \label{line:cnkm-end-expansion}
\Return{$C\!J\!N$}  \label{line:cnkm-return-results}
\end{algorithm}

In Algorithm~\ref{alg:cnkmperqmgen}, we present CJNInter. This algorithm takes as input a query match $M$ and the schema graph $G_S$. Next, it chooses a KM from the QM as a starting point, resulting in an unitary graph (Lines~\ref{line:cnkm-starting-node-start}-\ref{line:cnkm-starting-node-end}). If the query match $M$ has only one element, we already generated the one possible candidate joining network (Line~\ref{line:cnkm-unitary-cn}).

Next, the CJNInter initializes a queue $D$, which is used to store the JNKMs which are not CJNs (Lines~\ref{line:cnkm-inicialization-start}-\ref{line:cnkm-inicialization-end}). 
In Loop~\ref{line:cnkm-begin-expansion}-\ref{line:cnkm-end-expansion}, CJNInter takes one JNKM $J$ from the queue and tries to expand it with KMs. N
otice that $J$ can be expanded with incoming and outgoing neighbors, therefore it uses an undirected schema graph $G_S^U$(Line~\ref{line:cnkm-undirected-schema-graph}). 
Also, the elements of $M$ can only be added once in a JNKM but keyword-free matches can be added several times.

The expansion of $J$ results in a JNKM $J'$ (Lines~\ref{line:cnkm-new-tree-start}-\ref{line:cnkm-new-tree-end}). Then , CJNInter verifies whether $J'$ was already generated and whether it is \textit{sound}, according to Definition~\ref{def:sound-candidate-network}. If $J'$ fails to meet these two conditions it is pruned (Line~\ref{line:cnkm-check-sound}).

If $J'$ was not pruned, CJNInter checks whether $J'$ covers the query match $M$. If it does, $J'$ is a candidate joining network and it will be added to the list $C\!J\!N$. If $J'$ does not cover $M$, then it will  be added to the deque $D$ (Lines~\ref{line:cnkm-check-become-cn-start} -\ref{line:cnkm-check-become-cn-end}).
At the end of the procedure, CJNInter returns the set $C\!J\!N$ of candidate joining networks for the query match $M$(Line~\ref{line:cnkm-return-results}).

The CJN generation algorithm also implements some basic CJN pruning strategies, which are based on the following parameters: the top-k CJNs, the top-k CJNs per QM and the maximum CJN size.
Also, the algorithm implements a few strategies to prune the JNKMs which are not minimal or not sound, the maximum node Degree, the maximum number of keyword-free matches, and the distinct foreign keys.

\subsection{Maximum Node Degree}

As the leaves of a CJN must be keyword matches from the query match, then a CJN must have at most $|Q\!M|$ leaves. Also, considering that the maximum node degree in a tree is less or equal to the number of its leaves, we can safely prune the JNKMs that contains a node with a degree greater than $|Q\!M|$.

\subsection{Maximum Number of Keyword-free Matches}

The size of a CJN is based on the size of the query match and the number of keyword-free matches, that is, the size of a candidate joining network $C\!J\!N_M$ for a query match $M$ is given by $|C\!J\!N_M| {=} |M|{+}|F|$, where $F$ is a set of keyword-free matches. Thus, if we consider a maximum CJN size $T_{max}$, we can also set a maximum number of keyword-free matches for a CJN, given by $|F| {\leq} T_{max}{-}|M|$. Therefore, we can prune all JNKMs that contain more keyword-free matches than this maximum number set.

The number of CJNs generated can be further reduced by the pruning and ranking them.
In Section~\ref{sec:cn-ranking}, we present a ranking of the candidate joining networks returned by CJNGen. In Section~\ref{sec:cn-pruning}, we present pruning techniques for the generation of the candidate joining networks from CJNGen and CJNInter.
	
	\newpage


\begin{thebibliography}{10}
		\expandafter\ifx\csname url\endcsname\relax
			\def\url#1{\texttt{#1}}\fi
		\expandafter\ifx\csname urlprefix\endcsname\relax\def\urlprefix{URL }\fi
		\expandafter\ifx\csname href\endcsname\relax
			\def\href#1#2{#2} \def\path#1{#1}\fi
		
		\bibitem{Bergamaschi@SIGMOD11Keymantic}
		S.~Bergamaschi, E.~Domnori, F.~Guerra, R.~Trillo~Lado, Y.~Velegrakis, Keyword
			search over relational databases: a metadata approach, in: Proceedings of the
			2011 ACM SIGMOD International Conference on Management of data, ACM, 2011,
			pp. 565--576.
		
		\bibitem{Oliveira@ICDE18MatCNGen}
		P.~Oliveira, A.~da~Silva, E.~de~Moura, R.~Rodrigues, Match-based candidate
			network generation for keyword queries over relational databases, in: 2018
			IEEE 34th International Conference on Data Engineering (ICDE), IEEE, 2018,
			pp. 1344--1347.
		
		\bibitem{Coffman@CIKM10Framework}
		J.~Coffman, A.~C. Weaver, A framework for evaluating database keyword search
			strategies, in: Proceedings of the 19th ACM international conference on
			Information and knowledge management, ACM, 2010, pp. 729--738.
		
		\bibitem{Hristidis@VLDB02DISCOVER}
		V.~Hristidis, Y.~Papakonstantinou, Discover: Keyword search in relational
			databases, in: VLDB'02: Proceedings of the 28th International Conference on
			Very Large Databases, Elsevier, 2002, pp. 670--681.
		
		\bibitem{Agrawal@ICDE02DBXplorer}
		S.~Agrawal, S.~Chaudhuri, G.~Das, Dbxplorer: A system for keyword-based search
			over relational databases, in: Proceedings 18th International Conference on
			Data Engineering, IEEE, 2002, pp. 5--16.
		
		\bibitem{Hristidis@VLDB03Efficient}
		V.~Hristidis, L.~Gravano, Y.~Papakonstantinou, Efficient ir-style keyword
			search over relational databases, in: Proceedings of the 29th international
			conference on Very large data bases-Volume 29, VLDB Endowment, 2003, pp.
			850--861.
		
		\bibitem{Luo@SIGMOD07Spark}
		Y.~Luo, X.~Lin, W.~Wang, X.~Zhou, Spark: top-k keyword query in relational
			databases, in: Proceedings of the 2007 ACM SIGMOD international conference on
			Management of data, ACM, 2007, pp. 115--126.
		
		\bibitem{Coffman@KEYS10CD}
		J.~Coffman, A.~C. Weaver, Structured data retrieval using cover density
			ranking, in: Proceedings of the 2nd International Workshop on Keyword Search
			on Structured Data, ACM, 2010, p.~1.
		
		\bibitem{Baid@VLDB10KwSF}
		A.~Baid, I.~Rae, J.~Li, A.~Doan, J.~Naughton, Toward scalable keyword search
			over relational data, Proceedings of the VLDB Endowment 3~(1-2) (2010)
			140--149.
		
		\bibitem{Oliveira@ICDE15CNRank}
		P.~Oliveira, A.~da~Silva, E.~de~Moura, Ranking candidate networks of relations
			to improve keyword search over relational databases, in: 2015 IEEE 31st
			International Conference on Data Engineering, IEEE, 2015, pp. 399--410.
		
		\bibitem{Oliveira@TKDE20}
		P.~de~Oliveira, A.~{da Silva}, E.~Moura, R.~{de Freitas}, Efficient match-based
			candidate network generation for keyword queries over relational databases,
			IEEE Transactions on Knowledge and Data Engineering (2020) 1--1.
		
		\bibitem{Bergamaschi@VLDBDEMO13_QUEST}
		S.~Bergamaschi, F.~Guerra, M.~Interlandi, R.~Trillo-Lado, Y.~Velegrakis, Quest:
			A keyword search system for relational data based on semantic and machine
			learning techniques, Proc. VLDB Endow. 6~(12) (2013) 1222–1225.
		
		\bibitem{Affolter@VLDBJ2019_SurveyNLIDB}
		K.~Affolter, K.~Stockinger, A.~Bernstein, A comparative survey of recent
			natural language interfaces for databases, The VLDB Journal 28~(5) (2019)
			793--819.
		
		\bibitem{Aditya@VLDB02BANKS}
		B.~Aditya, G.~Bhalotia, S.~Chakrabarti, A.~Hulgeri, C.~Nakhe, S.~Sudarshanxe,
			et~al., Banks: Browsing and keyword searching in relational databases, in:
			VLDB'02: Proceedings of the 28th International Conference on Very Large
			Databases, Elsevier, 2002, pp. 1083--1086.
		
		\bibitem{Kacholia@VLDB05Bidirectional}
		V.~Kacholia, S.~Pandit, S.~Chakrabarti, S.~Sudarshan, R.~Desai, H.~Karambelkar,
			Bidirectional expansion for keyword search on graph databases, in:
			Proceedings of the 31st international conference on Very large data bases,
			VLDB Endowment, 2005, pp. 505--516.
		
		\bibitem{He@SIGMOD07BLINKS}
		H.~He, H.~Wang, J.~Yang, P.~S. Yu, Blinks: ranked keyword searches on graphs,
			in: Proceedings of the 2007 ACM SIGMOD international conference on Management
			of data, ACM, 2007, pp. 305--316.
		
		\bibitem{Liu@SIGMOD06Effective}
		F.~Liu, C.~Yu, W.~Meng, A.~Chowdhury, Effective keyword search in relational
			databases, in: Proceedings of the 2006 ACM SIGMOD international conference on
			Management of data, ACM, 2006, pp. 563--574.
		
		\bibitem{Cormen@09Algorithms}
		T.~H. Cormen, C.~E. Leiserson, R.~L. Rivest, C.~Stein, Introduction to
			algorithms, MIT press, 2009.
		
		\bibitem{Coffman@TKDE12Evaluation}
		J.~Coffman, A.~C. Weaver, An empirical performance evaluation of relational
			keyword search techniques, IEEE Transactions on Knowledge and Data
			Engineering 26~(1) (2012) 30--42.
		
		\bibitem{Bhalotia@ICDE02BANKS}
		G.~Bhalotia, A.~Hulgeri, C.~Nakhe, S.~Chakrabarti, S.~Sudarshan, Keyword
			searching and browsing in databases using banks, in: Proceedings 18th
			International Conference on Data Engineering, IEEE, 2002, pp. 431--440.
		
		\bibitem{Bergamaschi@ER2011KEYRY}
		S.~Bergamaschi, F.~Guerra, S.~Rota, Y.~Velegrakis, A hidden markov model
			approach to keyword-based search over relational databases, in: International
			Conference on Conceptual Modeling, Springer, 2011, pp. 411--420.
		
		\bibitem{Tata@SIGMOD08SQAK}
		S.~Tata, G.~M. Lohman, Sqak: doing more with keywords, in: Proceedings of the
			2008 ACM SIGMOD international conference on Management of data, ACM, 2008,
			pp. 889--902.
		
		\bibitem{Blunschi@VLDB2012_soda}
		L.~Blunschi, C.~Jossen, D.~Kossmann, M.~Mori, K.~Stockinger, Soda: Generating
			sql for business users, Proceedings of the VLDB Endowment 5~(10) (2012)
			932--943.
		
		\bibitem{Ramada@InfoSys2020SQUIRREL}
		M.~S. Ramada, J.~C. da~Silva, P.~de~S{\'a}~Leit{\~a}o-J{\'u}nior, From keywords
			to relational database content: A semantic mapping method, Information
			Systems 88 (2020) 101460.
		
		\bibitem{Miller@98Wordnet}
		G.~A. Miller, WordNet: An electronic lexical database, MIT press, 1998.
		
		\bibitem{Pedersen@ACL04WordNetSimilarity}
		T.~Pedersen, S.~Patwardhan, J.~Michelizzi, Wordnet:: Similarity: measuring the
			relatedness of concepts, in: Demonstration papers at HLT-NAACL 2004,
			Association for Computational Linguistics, 2004, pp. 38--41.
		
		\bibitem{Wu@ACL94WuPalmer}
		Z.~Wu, M.~Palmer, Verbs semantics and lexical selection, in: Proceedings of the
			32nd annual meeting on Association for Computational Linguistics, Association
			for Computational Linguistics, 1994, pp. 133--138.
		
		\bibitem{Keselj@09}
		V.~Keselj, Speech and language processing daniel jurafsky and james h. martin
			(stanford university and university of colorado at boulder) pearson prentice
			hall, isbn 978-0-13-187321-6 (2009).
		
		\bibitem{Mesquita@IPM07LABRADOR}
		F.~Mesquita, A.~S. da~Silva, E.~S. de~Moura, P.~Calado, A.~H. Laender,
			Labrador: Efficiently publishing relational databases on the web by using
			keyword-based query interfaces, Information Processing \& Management 43~(4)
			(2007) 983--1004.
		
		\bibitem{Ribeiro@1996BBN}
		B.~A. Ribeiro, R.~Muntz, A belief network model for ir, in: Proceedings of the
			19th annual international ACM SIGIR conference on Research and development in
			information retrieval, Citeseer, 1996, pp. 253--260.
		
		\bibitem{Cristo@2003BBN}
		M.~A.~P. de~Cristo, P.~P. Calado, M.~d.~L. Da~Silveira, I.~Silva, R.~Muntz,
			B.~Ribeiro-Neto, Bayesian belief networks for ir, International Journal of
			Approximate Reasoning 34~(2-3) (2003) 163--179.
		
		\bibitem{Pearl@2014Probabilistic}
		J.~Pearl, Probabilistic reasoning in intelligent systems: networks of plausible
			inference, Elsevier, 2014.
		
		\bibitem{Baeza@2011ModernRI}
		R.~Baeza-Yates, B.~Ribeiro-Neto, Modern Information Retrieval: The Concepts and
			Technology Behind Search, 2nd Edition, Addison-Wesley Publishing Company,
			USA, 2008.
		
		\bibitem{Salton@IPM88TFIDF}
		G.~Salton, C.~Buckley, Term-weighting approaches in automatic text retrieval,
			Information processing \& management 24~(5) (1988) 513--523.
		
		\bibitem{May@99MONDIAL}
		W.~May, Information extraction and integration with \textsc{Florid}: The
			\textsc{Mondial} case study, Tech. Rep. 131, Universit\"at Freiburg, Institut
			f\"ur Informatik, available from
			\url{http://dbis.informatik.uni-goettingen.de/Mondial} (1999).
		
		\bibitem{Ding@ICDE07DPBF}
		B.~Ding, J.~X. Yu, S.~Wang, L.~Qin, X.~Zhang, X.~Lin, Finding top-k min-cost
			connected trees in databases, in: 2007 IEEE 23rd international conference on
			data engineering, IEEE, 2007, pp. 836--845.
		
		\bibitem{Kasneci@ICDE09STAR}
		G.~Kasneci, M.~Ramanath, M.~Sozio, F.~M. Suchanek, G.~Weikum, Star:
			Steiner-tree approximation in relationship graphs, in: 2009 IEEE 25th
			International Conference on Data Engineering, IEEE, 2009, pp. 868--879.
		
		\bibitem{Kusner@PMLR15WMD}
		M.~Kusner, Y.~Sun, N.~Kolkin, K.~Weinberger, From word embeddings to document
			distances, in: International conference on machine learning, PMLR, 2015, pp.
			957--966.
		
		\bibitem{Idreos@SIGMOD15DataExploration}
		S.~Idreos, O.~Papaemmanouil, S.~Chaudhuri, Overview of data exploration
			techniques, in: Proceedings of the 2015 ACM SIGMOD International Conference
			on Management of Data, 2015, pp. 277--281.
		
		\bibitem{Chanial@VLDB18Connectionlens}
		C.~Chanial, R.~Dziri, H.~Galhardas, J.~Leblay, M.-H. Le~Nguyen, I.~Manolescu,
			Connectionlens: Finding connections across heterogeneous data sources,
			Proceedings of the VLDB Endowment (PVLDB) 11 (2018) 4.
		
		\bibitem{Zaki@IEEE00Eclat}
		M.~J. Zaki, Scalable algorithms for association mining, IEEE transactions on
			knowledge and data engineering 12~(3) (2000) 372--390.
		
		\bibitem{Hearne@73minimalCover}
		T.~Hearne, C.~Wagner, Minimal covers of finite sets, Discrete Mathematics 5~(3)
			(1973) 247--251.
		
	\end{thebibliography}
\end{document}